%% file: main.tex
\documentclass[acmsmall,screen,nonacm]{acmart}

\acmConference[PLDI'26]{PLDI'26}{June 2026}{Boulder, CO}
\acmYear{2026}
\acmISBN{} 
\acmDOI{} 
\startPage{1}

\setcopyright{none}

\usepackage{booktabs}   
\usepackage{subcaption} 

\usepackage{semantic}

\input{setup.tex}

\usepackage{booktabs}

\begin{document}

\title[Practical Formal Verification for MLIR Programs]{Practical Formal Verification for MLIR Programs}

\author{Emily Tucker}
\affiliation{\institution{Georgia Institute of Technology}\country{USA}}
\email{emily.tucker@gatech.edu}

\author{Louis-No{\"e}l Pouchet}
\affiliation{\institution{Colorado State University}\country{USA}}
\email{pouchet@colostate.edu}

\author{Erika Hunhoff}
\affiliation{\institution{AMD}\country{USA}}
\email{erika.hunhoff@amd.com}

\author{Stephen Neuendorffer}
\affiliation{\institution{AMD}\country{USA}}
\email{stephen.neuendorffer@amd.com}

\author{Erwei Wang}
\affiliation{\institution{AMD}\country{USA}}
\email{erwei.wang@amd.com}

\begin{abstract}
\input{abstract.tex}
\end{abstract}

\begin{CCSXML}
<ccs2012>
<concept>
<concept_id>10011007.10011006.10011008</concept_id>
<concept_desc>Software and its engineering~General programming languages</concept_desc>
<concept_significance>500</concept_significance>
</concept>
<concept>
<concept_id>10003456.10003457.10003521.10003525</concept_id>
<concept_desc>Social and professional topics~History of programming languages</concept_desc>
<concept_significance>300</concept_significance>
</concept>
</ccs2012>
\end{CCSXML}

\ccsdesc[500]{Software and its engineering~General programming languages}
\ccsdesc[300]{Social and professional topics~History of programming languages}

\keywords{Verification, Program equivalence,  MLIR, Parallel programming}

\maketitle

\section{Introduction}
\label{sec:introduction}

\input{introduction.tex}

\section{Motivation and Overview}
\label{sec:motivation}

\input{motivation.tex}

\section{Formalization of Program Correctness Verification}
\label{sec:formal}
\input{peqc-formal.tex}

\section{\PEQC Implementation}
\label{sec:implementation}
\input{peqc-implementation.tex}

\section{Enabling Correctness Checking of MLIR Programs}
\label{sec:mlirsupport}

\input{mlirsupport.tex}

\section{Experiments}
\label{sec:experiments}

\input{experiments.tex}

\section{Related Work}
\label{sec:related}
\input{related.tex}

\section{Conclusion}
\label{sec:conclusion}

\input{conclusion.tex}

\balance
\bibliography{bibs/sc24.bib,bibs/lnp.bib,bibs/gabriel.bib,bibs/ics15.bib,bibs/gabrielsc24.bib,bibs/ierefs.bib,bibs/fpga23.bib,bibs/refs.bib,bibs/rewriting.bib,bibs/emily.bib,bibs/vf-biblio.bib}

\newpage
\appendix

\input{appendix.tex}
\end{document}

%% file: setup.tex
\usepackage{caption}
\usepackage{amsmath,amsfonts}
\usepackage{algorithmic}
\usepackage{graphicx}
\usepackage{textcomp}
\usepackage{hyperref}
\usepackage{listings}
\usepackage{placeins}
\usepackage{url}
\usepackage{csvsimple}
\usepackage{adjustbox}
\usepackage{float}
\usepackage{array}
\usepackage{titlesec}
\usepackage{units}
\newcolumntype{P}[1]{>{\centering\arraybackslash}p{#1}}

\usepackage{paralist}
\usepackage{balance}

\usepackage{enumitem}      
\setitemize{noitemsep,topsep=0pt,parsep=0pt,partopsep=0pt}

\usepackage{backnaur}
\usepackage{tikz}
\usetikzlibrary{shapes.geometric,fit}
\usepackage{subfiles}
\usepackage{soul}
\usepackage{lscape}
\usepackage[ruled,lined,boxed,linesnumbered]{algorithm2e}
\usepackage{svg}
\usepackage{adjustbox}
\usepackage{float}
\usepackage{placeins}
\usetikzlibrary{arrows}
\usepackage{bookmark}
\usepackage{epigraph}

\definecolor{crimson}{rgb}{0.86, 0.08, 0.24}
\definecolor{crimsonglory}{rgb}{0.75, 0.0, 0.2}
\definecolor{light}{gray}{0.92}
\definecolor{dark}{gray}{0.50}
\definecolor{red}{rgb}{1,0,0}
\definecolor{azure}{rgb}{0, 0.5,1.0}
\definecolor{headertable}{gray}{0.65}
\definecolor{mygray}{gray}{1}

\lstdefinestyle{GACstyle}{
    basicstyle=\footnotesize\ttfamily,
    breakatwhitespace=false,
    breaklines=false,
    captionpos=b,
    keepspaces=true,
    numbers=left,
    numbersep=5pt,
    showspaces=false,
    showstringspaces=false,
    showtabs=false,
    tabsize=2,
    xleftmargin=0.035\textwidth,
    language=python,
    linewidth=1.0\textwidth,
    aboveskip=0.5cm,
}

\lstdefinestyle{MinStyle}{
    basicstyle=\footnotesize\ttfamily,
    breakatwhitespace=false,
    breaklines=false,
    captionpos=b,
    keepspaces=true,
    numbers=left,
    numbersep=5pt,
    showspaces=false,
    showstringspaces=false,
    showtabs=false,
    tabsize=2,
    xleftmargin=0.035\textwidth,
    language=C,
    linewidth=1.0\textwidth,
    aboveskip=0.5cm,
}

\lstdefinestyle{customc}{
  belowcaptionskip=1\baselineskip,
  numbers=left,
  numbersep=8pt,
  stepnumber=1,
  float=tb,
  tabsize=4,
  breaklines=true,
  frame=single,
  xleftmargin=\parindent,
  language=C,
  showstringspaces=false,
  basicstyle=\footnotesize\ttfamily,
  keywordstyle=\bfseries\color{green!40!black},
  commentstyle=\itshape\color{purple!40!black},
  identifierstyle=\color{blue},
  stringstyle=\color{orange},
  morekeywords={\_\_m256i,\_\_m256,\_\_m128,\_\_m128i,\_\_m512,\_\_m512i,defined},
  columns = fullflexible,
}

\lstdefinestyle{customcminipage}{
  belowcaptionskip=1\baselineskip,
  numbers=left,
  numbersep=8pt,
  stepnumber=1,
  tabsize=4,
  breaklines=true,
  frame=single,
  xleftmargin=\parindent,
  language=C,
  showstringspaces=false,
  basicstyle=\footnotesize\ttfamily,
  keywordstyle=\bfseries\color{green!40!black},
  commentstyle=\itshape\color{purple!40!black},
  identifierstyle=\color{blue},
  stringstyle=\color{orange},
  morekeywords={\_\_m256i,\_\_m256,\_\_m128,\_\_m128i,\_\_m512,\_\_m512i,defined},
  columns = fullflexible,
}

\lstdefinestyle{custommrt}{
  belowcaptionskip=1\baselineskip,
  float=tb,
  numbers=left,
  numbersep=8pt,
  stepnumber=1,
  tabsize=4,
  breaklines=true,
  frame=single,
  xleftmargin=\parindent,
  language=C,
  columns = fullflexible,
}

\lstdefinestyle{customasm}{
  belowcaptionskip=1\baselineskip,
  float=tb,
  numbers=left,
  numbersep=8pt,
  stepnumber=1,
  tabsize=4,
  frame=single,
  xleftmargin=\parindent,
  language=[x86masm]Assembler,
  basicstyle=\footnotesize\ttfamily,
  commentstyle=\itshape\color{purple!40!black},
  identifierstyle=\color{blue},
  stringstyle=\color{orange},
  keywordstyle=\bfseries\color{red!40!black},
  keywordstyle=[1]\bfseries\color{red!40!black},
  morekeywords=[1]{QWORD,PTR,DWORD,YMMWORD,XMMWORD,ZMMWORD},
  deletekeywords=[1]{r1,r2,r3,r4,r5,r6,r7,r8,r9,r10,r11,r12,r13,r14,r15,
  rax,rcx,rdx,rbx,rsp,rbp,rsi,rdi,rip,eax,ecx,edx,ebx,esp,ebp,esi,edi},
  keywordstyle=[2]\bfseries\color{blue!40!black},
  morekeywords=[2]{
  r1,r2,r3,r4,r5,r6,r7,r8,r9,r10,r11,r12,r13,r14,r15,
  rax,rcx,rdx,rbx,rsp,rbp,rsi,rdi,rip,eax,ecx,edx,ebx,esp,ebp,esi,edi,
  xmm0,xmm1,xmm2,xmm3,xmm4,xmm5,xmm6,xmm7,xmm8,xmm9,xmm10,xmm11,xmm12,xmm13,xmm14,xmm15,
  ymm0,ymm1,ymm2,ymm3,ymm4,ymm5,ymm6,ymm7},
  alsoletter=[2]{.,0,1,2,3,4,5,6,7,8,9},
  morecomment=[n][\bfseries\color{red!40!black}]{[}{]},
  morecomment=[f][\itshape\color{purple!40!black}][0]{;;},
  literate=*{,}{,}{1},
  columns = fullflexible,
}

\lstdefinestyle{custompython}{
  belowcaptionskip=1\baselineskip,
  float=tb,
  numbers=left,
  numbersep=8pt,
  stepnumber=1,
  tabsize=4,
  frame=single,
  xleftmargin=\parindent,
  language=Python,
  basicstyle=\footnotesize\ttfamily,
  commentstyle=\fontseries{lc}\selectfont\itshape\color{purple!40!black},
  identifierstyle=\color{blue},
  stringstyle=\color{orange},
}

\newcommand{\FIXME}[1]{\par\textbf{FIXME:~ #1}\par}

\newtheorem{theorem}{Theorem}[section]

\newtheorem{lemma}[theorem]{Lemma}
\newtheorem{corollary}[theorem]{Corollary}

\newtheorem{definition}[theorem]{Definition}

\usepackage{wrapfig}
\usepackage{xspace}
\usepackage{semantic}

\newcommand{\PEQC}{\textsf{PEQC-MLIR}\xspace}

\newcommand{\MLIRDIAL}[1]{\texttt{#1}}

%% file: abstract.tex
Optimizing compilers have become a cornerstone for high-performance program generation in research and industry. Optimizations, including those implemented manually by a user and those target-specific and non-target-specific, are used to transform programs to achieve good performance. Although these optimizations are necessary for performance, assessing their correctness has remained a major challenge; the risk of incorrect code being deployed increases with unproven optimization flows.

In this work, we target the formal verification of correctness of a transformed program by computing whether a pair of programs are semantically equivalent, one being a transformed version of the other. We restrict the class of programs supported to enable a hybrid concrete-symbolic interpretation approach to equivalence, which in turn is mostly agnostic to how the programs are implemented (syntax, schedule, storage, etc.). This approach can show equivalence in linear time with respect to the operations executed by the programs. We develop a verifier for a meaningful subset of MLIR, and report on the verification of the AMD MLIR-AIR and MLIR-AIE toolchains, as well as the standard mlir-opt on hundreds of benchmarks variants.

%% file: introduction.tex
It has long been a requirement to optimize a program for performance by modifying its implementation: altering the order of operations (e.g., loop tiling), parallelizing operations and implementing their synchronizations, and exposing instruction-level parallelism (e.g., via unrolling) are typical mandatory optimizations for performance of numerical applications. However, the issue of determining whether the transformed program maintains the semantics of the original program remains a fundamental challenge. No global system can systematically prove the correctness of arbitrary transformations applied to arbitrary programs. 

To mitigate the risks of incorrect transformations being deployed, numerous approaches have been developed, typically targeting a specific framework or program class. The techniques range from simple testing to more complex fuzzing \cite{chen2013taming,marcozzi2019compiler}, where the outputs produced by both programs are compared, facilitating the detection of bugs. Compiler optimizations can be validated using translation validation techniques, such as Alive2 \cite{lopes2021alive2} for LLVM IR. A pair of programs can be checked for symbolic equivalence using symbolic execution and employing SMT queries to reason on program properties, as with e.g., UC-KLEE \cite{ramos2015under,klee-web} or CIVL \cite{siegel2015civl}. 
Techniques that prove equivalence between a pair of affine programs, such as ISA for a class of affine program transformations \cite{verdoolaege2012equivalence}, have been proposed but are typically very limited in scope.
These techniques all present different trade-offs in terms of coverage (e.g., types of input programs, types of transformations performed), speed, and accuracy. In this work, we target the verification of correctness of transformations in a manner mostly agnostic to how they were produced: taking as input a pair of programs, one being a transformed version of the other, we aim to prove their semantic equivalence for a practical class of programs that includes highly optimized implementations of key AI workloads.

PEQC \cite{fpga24} is a hybrid concrete-symbolic interpreter which concretely evaluates (and simplifies) the input program and builds a symbolic representation of all other operations that contribute to the production of the final outputs \cite{fpga24}. These representations are then checked for equivalence, either proving correctness or exposing mismatches. However, PEQC trades generality in the program structure for restrictions of the class of programs handled: it typically requires fixed problem sizes (e.g., in a matrix-multiply program, the sizes of matrices are fixed and known at compile-time, and correctness guarantees apply to any input matrices using this specific size).

In this work, we leverage the approach developed in PEQC to develop a novel framework for verification via hybrid concrete-symbolic interpretation. We significantly revisit and extend PEQC in \PEQC, the system presented in this paper. \PEQC supports verification of: (1) general, hierarchical parallel tasks, (2) arbitrary synchronization with binary (\texttt{wait}/\texttt{set}) and counting (\texttt{acquire}/\texttt{release}) semaphores, and (3) a class of symbolic conditional control-flow, significantly broadening the scope of applicability of this approach. Furthermore, we develop a complete formalization of a proxy parallel language representing \PEQC's input, detailing the process of ensuring correctness of parallel programs and computing their equivalence, and proving correctness of our approach.  

To enable wide applicability of this approach, we develop a verifier for a set of MLIR \cite{lattner2021mlir} programs, including concurrent programs using the \MLIRDIAL{async} dialect \cite{mlir-web}, and the ability to verify their equivalence to a sequential version.
We use this verifier to expose several important bugs in tools such as \texttt{mlir-opt}. 
We also contribute support for the MLIR-AIR \cite{mlir-air,wang2025loop} and MLIR-AIE \cite{mlir-aie-github} systems to map computations to AMD AI Engines \cite{aiearch.micro.2024}, enabling the verification of correctness for practical workloads deployed on the AI Engines. We make the following contributions:
\begin{itemize}[leftmargin=*]
\item We present an end-to-end, fully automated approach to verify the equivalence between a pair of (concurrent) programs, under practical restrictions.
\item We formalize our hybrid concrete-symbolic approach and prove that it correctly detects and handles deterministic parallel programs and their equivalence in Sec.~\ref{sec:formal}.
\item We introduce our approach to the correctness verification of a rich class of MLIR programs and their transformations, by successive lowering and conversion of user-defined dialects to standard dialects, and eventually using Parallel Intermediate Representation (PIR) of \PEQC in Sec.~\ref{sec:mlirsupport}.
\item We present extensive evaluation of \PEQC in Sec.~\ref{sec:experiments}, over hundreds of MLIR program variants generated using \texttt{mlir-opt} and Polygeist, uncovering a variety of bugs and lack of analysis to adequately apply transformations. We present our verification for the AMD MLIR-AIR and MLIR-AIE toolchains for mapping AI workloads to the AI Engine, exposing subtle concurrency bugs that were then fixed in these toolchains.
\end{itemize}

%% file: motivation.tex
This section presents the target problem, associated challenges, and outlines our proposed solution.
Our objective is to prove the correctness of programs by enabling the seamless integration of correctness verification within a progressive optimization flow. 
Starting from a high-level algorithmic description, high-performance programs are typically progressively optimized and refined by automatic or manual modifications to the program. 

The Multi-Level IR (MLIR) ecosystem exemplifies this flow. 
MLIR \cite{lattner2021mlir} is a compiler infrastructure designed to represent programs using multiple intermediate representations at different levels of abstraction within a single framework. 
Rather than relying on a monolithic IR, MLIR organizes program representations into dialects, each tailored to express specific program semantics, such as tensor algebra, loop nests, or structured control flow \cite{mlir-web,llvm-web}.
A central concept in MLIR is progressive lowering, where programs are gradually transformed from high-level, semantic-rich representations into lower-level forms that make execution details explicit. Fig.~\ref{fig:mlirexample} provides an illustration of this.
MLIR allows analyses and transformations to be applied at the abstraction level where they are most effective, while preserving program structure as long as possible during compilation.

\begin{figure}[h!tb]
\begin{minipage}[c]{0.99\textwidth}
\begin{minipage}[b]{0.53\textwidth}
\begin{minipage}[b]{0.99\textwidth}
\hspace{-.5cm}\begin{lstlisting}[basicstyle=\tiny,numbers=none]
@linalg_structured_op
def matmul_mono(
    A=TensorDef(T, S.M, S.K),
    B=TensorDef(T, S.K, S.N),
    C=TensorDef(T, S.M, S.N, output=True),
):
    domain(D.m, D.n, D.k)
    C[D.m, D.n] += A[D.m, D.k] * B[D.k, D.n]

@module_builder
def matmul_on_tensors(m, n, k):
    dtype = IntegerType.get_signless(width=32)

    @func.FuncOp.from_py_func(
        MemRefType.get((m, k), dtype), 
        MemRefType.get((k, n), dtype)
    )
    def forward(lhs, rhs):
        out = memref.AllocOp(MemRefType.get((m, n),
        dtype), [], [])
        zero = arith.ConstantOp(dtype, 0)
        zero_fill = linalg.fill(zero, outs=[out])
        matmul_mono(lhs, rhs, outs=[out])
        return out
\end{lstlisting}        
\end{minipage}
\hrule
\begin{minipage}[b]{0.99\textwidth}
\begin{lstlisting}[basicstyle=\tiny,numbers=none]
#map = affine_map<(d0, d1, d2) -> (d0, d2)>
#map1 = affine_map<(d0, d1, d2) -> (d2, d1)>
#map2 = affine_map<(d0, d1, d2) -> (d0, d1)>
module {
 func.func @forward(%arg0: memref<16x16xi32>, 
 %arg1: memref<16x16xi32>) -> memref<16x16xi32> {
   %alloc = memref.alloc() : memref<16x16xi32>
   %c0_i32 = arith.constant 0 : i32
   linalg.fill ins(%c0_i32 : i32) outs(%alloc : 
   memref<16x16xi32>)
   linalg.generic {indexing_maps = [#map, #map1, #map2], 
   iterator_types = ["parallel", "parallel",
   "reduction"]} ins(%arg0, %arg1 : memref<16x16xi32>,
   memref<16x16xi32>) outs(%alloc : memref<16x16xi32>) 
   {
   ^bb0(%in: i32, %in_0: i32, %out: i32):
     %0 = arith.muli %in, %in_0 : i32
     %1 = arith.addi %out, %0 : i32
     linalg.yield %1 : i32
   }
   return %alloc : memref<16x16xi32>
 }
}
\end{lstlisting}
\end{minipage}
\end{minipage}
\vline
\begin{minipage}[b]{0.49\textwidth}
\begin{lstlisting}[basicstyle=\tiny,numbers=none]
module {
 func.func @forward(%arg0: memref<16x16xi32>,
 %arg1: memref<16x16xi32>) -> memref<16x16xi32> {
   %c0_i32 = arith.constant 0 : i32
   %alloc = memref.alloc() : memref<16x16xi32>
   %c0 = arith.constant 0 : index
   %c16 = arith.constant 16 : index
   %c1 = arith.constant 1 : index
   scf.for %arg2 = %c0 to %c16 step %c1 {
     %c0_3 = arith.constant 0 : index
     %c16_4 = arith.constant 16 : index
     %c1_5 = arith.constant 1 : index
     scf.for %arg3 = %c0_3 to %c16_4 step %c1_5 {
       memref.store %c0_i32, %alloc[%arg2, %arg3] : 
       memref<16x16xi32>
     }
   }
   %c0_0 = arith.constant 0 : index
   %c16_1 = arith.constant 16 : index
   %c1_2 = arith.constant 1 : index
   scf.for %arg2 = %c0_0 to %c16_1 step %c1_2 {
     %c0_3 = arith.constant 0 : index
     %c16_4 = arith.constant 16 : index
     %c1_5 = arith.constant 1 : index
     scf.for %arg3 = %c0_3 to %c16_4 step %c1_5 {
       %c0_6 = arith.constant 0 : index
       %c16_7 = arith.constant 16 : index
       %c1_8 = arith.constant 1 : index
       scf.for %arg4 = %c0_6 to %c16_7 step %c1_8 {
         %0 = memref.load %arg0[%arg2, %arg4] : 
         memref<16x16xi32>
         %1 = memref.load %arg1[%arg4, %arg3] : 
         memref<16x16xi32>
         %2 = memref.load %alloc[%arg2, %arg3] : 
         memref<16x16xi32>
         %3 = arith.muli %0, %1 : i32
         %4 = arith.addi %2, %3 : i32
         memref.store %4, %alloc[%arg2, %arg3] :
         memref<16x16xi32>
       }
     }
   }
   return %alloc : memref<16x16xi32>
 }
}  
\end{lstlisting}
\end{minipage}
\end{minipage}
\caption{\label{fig:mlirexample}MLIR example. Top left: the Python program to describe matrix-multiplication, which is then expressed as a MLIR Linalg program. 
Bottom left: the resulting Linalg program, for a problem size of 16x16. 
Right: the scf program obtained by lowering Linalg to affine with --linalg-to-affine-loops and then to scf with --lower-affine. \PEQC targets the verification of correctness for MLIR programs by exploiting such lowerings, as described in Sec.~\ref{sec:mlirsupport}.}
\end{figure}

AMD has developed the MLIR-AIR \cite{mlir-air-github,wang2025loop} and MLIR-AIE \cite{mlir-aie-github, hunhoff2025} compilation flows, for production use to map AI workloads to the AMD AI Engine \cite{wang2025loop} which is integrated in AMD Ryzen AI processors. Starting from a high-level description of the computation, for example \texttt{matmul} in the MLIR \MLIRDIAL{linalg} dialect or its equivalent Python description, the program is successively lowered, transformed, and optimized into an AIE program which maps computation and data movement to concurrent hardware units, synchronizing them using FIFOs and counting semaphores (\texttt{acquire}/\texttt{release}). A 2-line sequential \texttt{matmul} program operating on tensors turns into a multi-buffered, highly concurrent program of nearly 1k lines. These transformations take place at the MLIR level and may be either fully automated by a compiler, implemented by AI agents, or implemented manually by human engineers.
At every transformation stage, we aim to provide a system that assesses the correctness of a transformation, be it manual, automated or produced by an open-source or black-box tool.
We validate correctness by computing whether the original and transformed programs are semantically equivalent: they must produce the same results for all output variables.

\paragraph*{Challenges} Numerous challenges must be addressed to achieve these objectives. First, \emph{our system should be robust to any means used to transform a program}, to reflect typical design flows. This includes programs produced by generative AI. We specifically do not rely on the user to define high-level semantic properties that an implementation should match or how a transformation should modify the program. Instead, we directly prove equivalence for a class of programs without user input. 
Second, \emph{our system should be robust to a wide class of optimizations, which includes arbitrary data layout changes, bufferization schemes, data reuse implementation, etc.}. These transformations are critical for high-performance execution of AI workloads on accelerators, such as the AI Engine, where the orchestration of data movement requires careful fine-tuning. Such a system should be future-proof to new variations of these optimizations, and should avoid the need to specify the transformation effects to enable verification (e.g., \cite{bhat24,yin25}). Third, \emph{our system should be fully capable of reasoning about the correctness of concurrency, parallelization, and arbitrary synchronization approaches}, including assessing the correctness of parallelizing a sequential program. This support is fundamental because most practical hardware is concurrent, and often has numerous hardware replications of computation/communication units that must be exploited for high performance.
Fourth, \emph{our system should provide stronger guarantees than testing alone}. Although testing programs on pre-defined inputs/outputs can expose bugs (e.g., concurrency bugs with dynamic race detectors \cite{gu18}), this is not a guarantee of correctness. The ability to reason \emph{symbolically} on possible inputs enables providing correctness guarantees for sets of possible concrete values these symbols may take, and not simply exposing bugs.

\begin{figure}[h!tb]
\includegraphics[width=12cm]{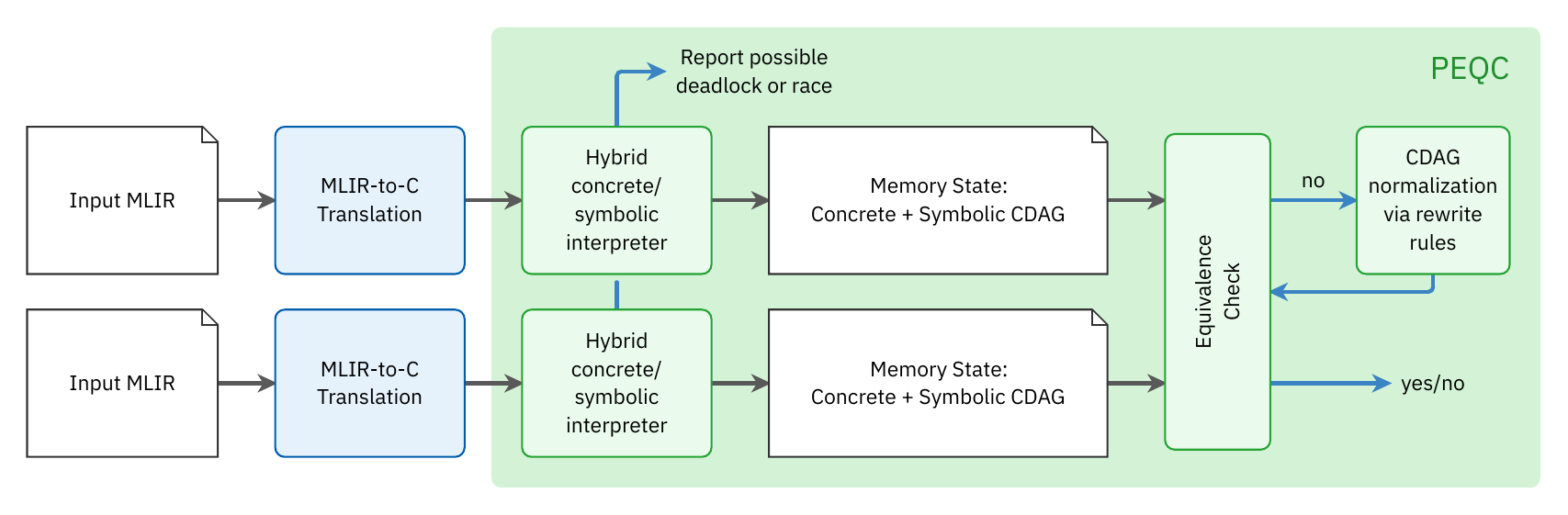}
\vspace{-.5cm}
\caption{\label{fig:flow1} Overall flow for verification with \PEQC, and the MLIR-to-PIR translators presented in Sec.~\ref{sec:mlirsupport}}
\end{figure}
\paragraph*{\PEQC} In this paper, we present \PEQC, which provides a solution to the challenges above by carefully restricting the class of programs supported while still enabling full support of representative AI workloads. 
\PEQC is based on prior work by Pouchet, Tucker et al. \cite{fpga24,tucker2025verification}, which we significantly extend to support a much wider class of (concurrent) programs. Their system implements a hybrid concrete-symbolic interpretation of the program, simplifying out any expressions or control-flow that can be computed at compile time for the program (via partial concrete interpretation), and computing symbolic values/expressions for everything else. 
Through this hybrid interpretation stage, a symbolic representation of the program is built. Equivalence can then be shown by proving these representations are semantically equivalent (which is trivial if these normalized representations are identical \cite{fpga24}). This symbolic representation is based on Computational Directed Acyclic Graphs (CDAGs \cite{elango2015characterizing}), which have a space complexity linear in the number of operations executed in the program. Illustrative examples and intuitions of this approach are detailed in \cite{fpga24}.
For example, in the context of High-Level Synthesis (HLS), we demonstrated that this approach can verify a highly optimized systolic array. We targeted an implementation produced by the AutoSA \cite{wang2021autosa} compiler consisting of 8k processing elements and 16k FIFOs. By successfully verifying this against a 5-line dense matrix-multiply loop, we demonstrated both scalability and coverage of key concurrency constructs.
Indeed, the correctness is proved by treating all elements of the input matrices $A,B$ as symbols. For every element of the output $C$, we compute a symbolic expression corresponding to the computation performed (itself only a function of elements of $A,B$ \cite{fpga24}). This symbolic analysis ensures equivalence will be satisfied whichever the actual concrete values these matrices may be at runtime. 

In this work, we entirely revisit the approach to concurrency support and parallelization correctness checking in PEQC, while significantly broadening the class of programs supported, including some input-dependent symbolic conditional control flow which is modeled as conditional CDAGs. 
Fig.~\ref{fig:flow1} outlines our system. 
The \PEQC components are presented in Sec.~\ref{sec:formal}, where we develop small-step operational semantics for a (simplified) parallel language supporting synchronizations using binary (\texttt{set}/\texttt{wait}) and counting (\texttt{acquire}/\texttt{release}) semaphores. We prove that our system only accepts deterministic parallel programs (reporting race/deadlock otherwise). We prove it computes a correct, equivalent symbolic representation of the programs for subsequent equivalence checking. 
This language is rich enough to accurately capture the semantics of a wide set of MLIR programs, after conversion and possible lowering, enabling the verification of pairs of MLIR programs including in the AIR and AIE dialects as well as standard dialects such as \texttt{async}, \texttt{scf}, and \texttt{affine}. This conversion is presented in Sec.~\ref{sec:mlirsupport} and is fundamental to deploy our staged approach to correctness verification. Experimental results on a variety of AIR and AIE designs for the AI Engine are presented in Sec.~\ref{sec:experiments}, including subtle concurrency bugs we have exposed for the first time, along with an extensive evaluation of \texttt{mlir-opt} optimizations over hundreds of benchmarks.

%% file: peqc-formal.tex
\newcommand{\expeval}[2]{#1 \Downarrow_{M, s} #2}
\newcommand{\expsym}[2]{#1 \Downarrow_{M, s} #2}

\newcommand{\ncode}[1]{#1_{code}}
\newcommand{\nid}[1]{#1_{id}}
\newcommand{\hb}{\rightsquigarrow}
\newcommand{\hbstar}{\hb^{*}}
\newcommand{\pstrans}[1]{\rightarrow^?_{#1}}
\newcommand{\strans}{\rightarrow}

\newcommand{\taskleaf}{task\_leaf}
\newcommand{\graphvalid}{graph\_valid}
\newcommand{\progeval}{\Downarrow}
\newcommand{\corrset}{corr\_set}
\newcommand{\corrrel}{corr\_rel}

\newcommand{\memupdate}{mem\_update}
\newcommand{\getvar}{get\_var}
\newcommand{\vareval}{var\_eval}
\newcommand{\memdecl}{mem\_decl}

We now formalize our approach to computing equivalence between programs, using a hybrid concrete-symbolic interpretation to compute a normalized symbolic representation of the final memory state when the program terminates without error.

We first present a simplified proxy for the parallel language we support in \PEQC in Sec.~\ref{subsec:grammar}, and how to evaluate and represent symbolic expressions in our model in Sec.~\ref{subsec:expreval}. We then present in Sec.~\ref{subsec:smallstepsem} a small-step operational semantics for this parallel language which is used to build a state-based representation of the program $(G,M)$, where $G$ is a graph with nodes representing each execution of every statement, and $M$ is the representation of the memory state. 
We prove our system can only reach a final state (i.e., the program terminates without error) for deterministic parallel programs, which implies the absence of race and deadlocks (and conversely, that our system errors in the presence of deadlocks and races), in Sec.~\ref{subsec:determinismproof}. 
In Sec.~\ref{subsec:equivalence}, we discuss proving equivalence between the representations obtained for two programs: $(G,M)\equiv(G',M')$, and present details of our full implementation in Sec.~\ref{sec:implementation}.

\subsection{A Simple Featherweight Language}
\label{subsec:grammar}

The grammar in Fig.~\ref{fig:grammar1} captures programs in our featherweight language. It is a simplified version of our intermediate representation implementation outlined in Sec.~\ref{subsec:implem-pir}, but it captures all critical constructs required to  build a symbolic representation of the program while reasoning on synchronizations with semaphores. The semantics of each construct is presented in Sec.~\ref{subsec:smallstepsem}.

\begin{center}
\begin{figure}[h!tb]
\vspace{-0cm}
{
$$
\begin{array}{l r c l}
\text{Program} & P & \rightarrow & s \text{; } P \mid \epsilon \\[6pt]
\text{Statement} & s & \rightarrow & \texttt{int } v[] \\
& & \mid & v[e_i] = e \\
& & \mid & \texttt{async} \{ P \} \\
& & \mid & \texttt{set}(e_1, e_2) \\
& & \mid & \texttt{wait}(e_1, e_2) \\
& & \mid & \texttt{acquire}(e_1, e_2) \\
& & \mid & \texttt{release}(e_1, e_2) \\
& & \mid & \texttt{while} (e) \{ P \} \\
\text{Expression} & e & \rightarrow & c \\
& & \mid & v[e_i] \\
& & \mid & e_1 + e_2 \\
& & \mid & e_1 < e_2 \\
& & \mid & (e_{cond} ? e_1 : e_2)
\end{array}
$$
}\vspace{-0cm}
\caption{\label{fig:grammar1}Grammar for the simplified language.}
\end{figure}
\end{center}

\subsection{Expression Evaluation}
\label{subsec:expreval}

Before describing the semantics of our featherweight but complete language, we first present the evaluation of expressions. This is key for our system: we mix concrete evaluation and symbolic representation for expressions, and rely on correctly implementing the concrete semantics of arithmetic expressions in the interpreter. For example, we implement the \texttt{int} 32-bit data type, with C semantics, to evaluate expressions composed of only concrete values. 
That is, we replace $1+1$ in the program with $2$ via concrete evaluation. But if values of some operands in the expression are not known, for example $in1+1$ where the input value $in1$ is read before being written in the program (live-in), we instead promote this expression to a symbolic representation using an AST-style tree representation: $+(in1, 1)$. 
This symbolic expression is then used to update other expressions. For example, after interpreting $in1 = in1+1;in1 = in1 *2$, $in1$ is represented as $sym(*(+(in1, 1),2))$, where $sym(e)$ denotes an expression $e$ represented symbolically.
As shown in Sec.~\ref{subsec:smallstepsem}, we restrict where symbolic expressions may occur during interpretation to ensure we deterministically obtain the final memory representation at the end of the program, for example preventing them in array subscripts (we must always be able to associate a unique name to a memory location) or in values set for semaphores.

Expression evaluation is defined in Fig.~\ref{fig:expression-rules}. $\expeval{e}{c}$ represents the evaluation of $e$ to $c$ (a concrete value) with respect to statement $s$. $err$ triggers an error state. Helper functions are defined in Sec.~\ref{subsec:smallstepsem}: $var\_eval(M,s,n,i)$ extracts the current representation in memory $M$ of the array cell named \texttt{n[i]}. 
Here we only illustrate key evaluation constructs; our full implementation develops this approach for additional unary and binary arithmetic operators.
\begin{figure}[h!tb]
    \centering
    
    \begin{minipage}{0.32\textwidth}
        \centering
        $$
        \inference{}{\expeval{\mathrm{c}}{c}}
        $$
    \end{minipage}
    \begin{minipage}{0.32\textwidth}
        \centering
        $$
        \inference{\expeval{e_i}{\textit{sym}(s)}}{\expeval{n[e_i]}{\textit{err}}}
        $$
    \end{minipage}
    \begin{minipage}{0.32\textwidth}
        \centering
        $$
        \inference{\expeval{e_i}{\textit{err}}}{\expeval{n[e_i]}{\textit{err}}}
        $$
    \end{minipage}
    
    \vspace{.5em}
    
    \begin{minipage}{0.48\textwidth}
        \centering
        $$
        \inference{\expeval{e_i}{i} & \vareval(M, s, n, i) = v}{\expeval{n[e_i]}{v}}
        $$
    \end{minipage}
    \begin{minipage}{0.48\textwidth}
        \centering
        $$
        \inference{\expeval{e_i}{i} & \vareval(M, s, n, i) = \bot}{\expeval{n[e_i]}{\textit{sym}(n[i])}}
        $$
    \end{minipage}
    
    \vspace{.5em}
    
    \begin{minipage}{0.48\textwidth}
        \centering
        $$
        \inference{\expeval{e_1}{v_1} & \expeval{e_2}{v_2}}{\expeval{e_1 + e_2}{\mathrm{val\_add}(v_1, v_2)}}
        $$
    \end{minipage}
    \begin{minipage}{0.48\textwidth}
        \centering
        $$
        \inference{\expeval{e_c}{v_c} & \expeval{e_1}{v_1} & \expeval{e_2}{v_2}}{\expeval{e_c ? e_1 : e_2}{\mathrm{val\_tern}(v_c, v_1, v_2)}}
        $$
    \end{minipage}
    
    \vspace{.5em}

    \begin{minipage}{0.48\textwidth}
        \raggedright 
        
        $\mathrm{val\_add}(v_1, v_2) =$
        $$
        \begin{cases}
        c_1 + c_2 & \text{if } v_1 = c_1, v_2 = c_2 \\
        \textit{sym}(s + c) & \text{if } v_1 = \textit{sym}(s), v_2 = c \\
        \textit{sym}(c + s) & \text{if } v_1 = c, v_2 = \textit{sym}(s) \\
        \textit{sym}(s_1 + s_2) & \text{if } v_1 = \textit{sym}(s_1), v_2 = \textit{sym}(s_2) \\
        \textit{err} & \text{if } v_1 = \textit{err} \lor v_2 = \textit{err}
        \end{cases}
        $$
    \end{minipage}
    {\footnotesize
    \begin{minipage}{0.48\textwidth}
        \raggedright
        
        $\mathrm{val\_tern}(v_c, v_1, v_2) =$
        $$
        \begin{cases}
        v_2 & \text{if } v_c = c \land c = 0 \\
        v_1 & \text{if } v_c = c \land c \neq 0 \\
        \textit{sym}(s_c ? c_1 : c_2) & \text{if } v_c = \textit{sym}(s_c), v_1=c_1, v_2=c_2 \\
        \textit{sym}(s_c ? s_1 : c) & \text{if } v_c = \textit{sym}(s_c), v_1=\textit{sym}(s_1), v_2=c \\
        \textit{sym}(s_c ? c : s_1) & \text{if } v_c = \textit{sym}(s_c), v_1=c, v_2=\textit{sym}(s_1) \\
        \textit{sym}(s_c ? s_1 : s_2) & \text{if } v_c = \textit{sym}(s_c), v_1=\textit{sym}(s_1), v_2=\textit{sym}(s_2) \\
        \textit{err} & \text{if } v_c = \textit{err} \lor v_1 = \textit{err} \lor v_2 = \textit{err}
        \end{cases}
        $$
    \end{minipage}
    }
\vspace{-.3cm}
    \caption{Expression evaluation semantics.}
    \label{fig:expression-rules}
    \vspace{-0.5cm}
\end{figure}

\subsection{Small-step Operational Semantics}
\label{subsec:smallstepsem}

Our objective is to reason on parallel programs with possibly hierarchical tasks spawned with \texttt{async} and synchronized with semaphores, proving the absence of data races and deadlock in a program or producing an error otherwise. 
While Jin et al. proved a race-free program is deterministic under synchronization with promises \cite{jin23}, here we prove that \emph{any} program is deterministically handled by our system, including when using counting semaphores. A correct parallel program must ensure that every operation that may execute in parallel cannot read/write the same memory location at the same time, and that it produces the same symbolic computation result as the reference program. Here we assume that only explicit synchronizations in the program (i.e., semaphore operations) can force an ordering between sets of statements in the program; we never rely on latency and timing of operations to determine their execution order.

\paragraph*{Notation} In our semantics, all variables are arrays of either symbolic or concrete values.
The memory of the program is represented by a function $M : (id, name) \rightharpoonup (\mathbb{N} \rightharpoonup value)$ that maps a tuple of task ID and variable name to an array of values.
$M$ need not be defined on a given task ID and variable name; similarly, a variable's value at any index need not be initialized.

We define several helper functions for accessing and modifying the memory representation.
The function $\getvar(M, id, v)$ returns the task ID where the variable name was declared: child threads can access their parents' variables.
The function $\vareval(M, id, v, i)$ returns the value of a variable at index $i$. 
If $\getvar(M,id, v) \in \mathrm{dom}(M)$ and $i \in \mathrm{dom}(M(\getvar(M,id, v)))$, then $\vareval(M, id, v, i) = M(\getvar(M,id, v))(i)$. Otherwise, $\vareval(M, id, v, i) = \bot$.

To modify $M$, we use the shorthand $M[(id, v) \mapsto (i \mapsto e)]$ to represent a function that is identical to $M$ except with the modification that $M(id, v)(i) = e$.
The function $\memdecl(M, id, v)$ declares a variable with a given task ID and name.
If $(id, v)$ is already in $\mathrm{dom}(M)$, $\memdecl(M, id, v) = M$.
Otherwise, $\memdecl(M, id, v) = M[(id, v) \mapsto \emptyset]$: initially, the variable is undefined at all indices.
The function $\memupdate(M, id, v, i, e)$ assigns the value of a variable at index $i$ to be a concrete or symbolic value, $e$:
$\memupdate(M, id, v, i, e) = M[\getvar(M, id, v) \mapsto (i \mapsto e)]$.

Intuitively, we present a system represented by a graph $G$ where every node is one instance of a statement, and where an edge between two nodes $s$ and $s'$ represent the requirement for $s$ to execute before $s'$, as per the program synchronizations.
Nodes in the graph are of the form $(id, P)$, where $id$ is the ID of the task executing the statement, and $P$ is the rest of the program the task will execute, starting with the statement the node represents.
We use $n \hb m$ to denote that an edge exists from $n$ and $m$ in $G$, and $n \hbstar m$ to denote that a \emph{path} exists from $n$ to $m$ in $G$.
A new node in $G$ is created for each statement instance. For example, in Figure~\ref{fig:formal-ex1}, nodes for statements in the loop body appear $N$ times in the graph.

\begin{definition}[Happens Before]
\label{def:hb}
For graph $G$ and two nodes $n, m \in G$, $n$ happens before $m$, denoted $n \hbstar m$, if and only if there exists a path from $n$ to $m$ in $G$. 
\end{definition}

\begin{definition}[May Happen in Parallel]
\label{def:mhp}
For graph $G$ and two nodes $n, m \in G$, $n$ may happen in parallel with $m$, denoted $n \parallel m$, if and only if $n \not\hbstar m$ and $m \not\hbstar n$.
\end{definition}

We define the semantics of programs as a relation over program states, where a state is either a tuple $(M, G)$ of program memory and graph, or an error state $error(G)$ that represents an invalid state. 
Error states are identical: for all $G, G'$, we consider $error(G)$ and $error(G')$ to be equivalent.

We present a very simple  example of a program and its graph $G$ in Fig.~\ref{fig:formal-ex1}. Our small-step semantics operates on the source code on the left to produce and manipulate the graph $G$ on the right. 
Note that we limit use to the featherweight language constructs, but the featherweight language concepts are capable of capturing core computations and their synchronizations. 
We briefly discuss in Sec.~\ref{subsec:implem-pir} the full set of additional operations and constructs we handle, and how to convert them to the featherweight language.

\begin{figure}[h!tb]
\begin{minipage}{0.99\textwidth}
\hspace{0.5cm}\begin{minipage}{0.49\textwidth}
\begin{lstlisting}
// all semaphores initialized to 0
i[0] = 0;
e[0] = i[0] < 2;
// for (i = 0; i < 2; i++)
while (e[0]) {
  async {
    A[i[0]] = A[i[0]] + 1;
    set(sems[i[0]], 1);
  }
  async {
    wait(sems[i[0]], 1);
    A[i[0]] = A[i[0]] * 2;
  }
  i[0] = i[0] + 1;
  e[0] = i[0] < 2;
}
\end{lstlisting}
\end{minipage}
\begin{minipage}{0.49\textwidth}
\centering
\hspace{-1cm}\includegraphics[width=0.9\textwidth]{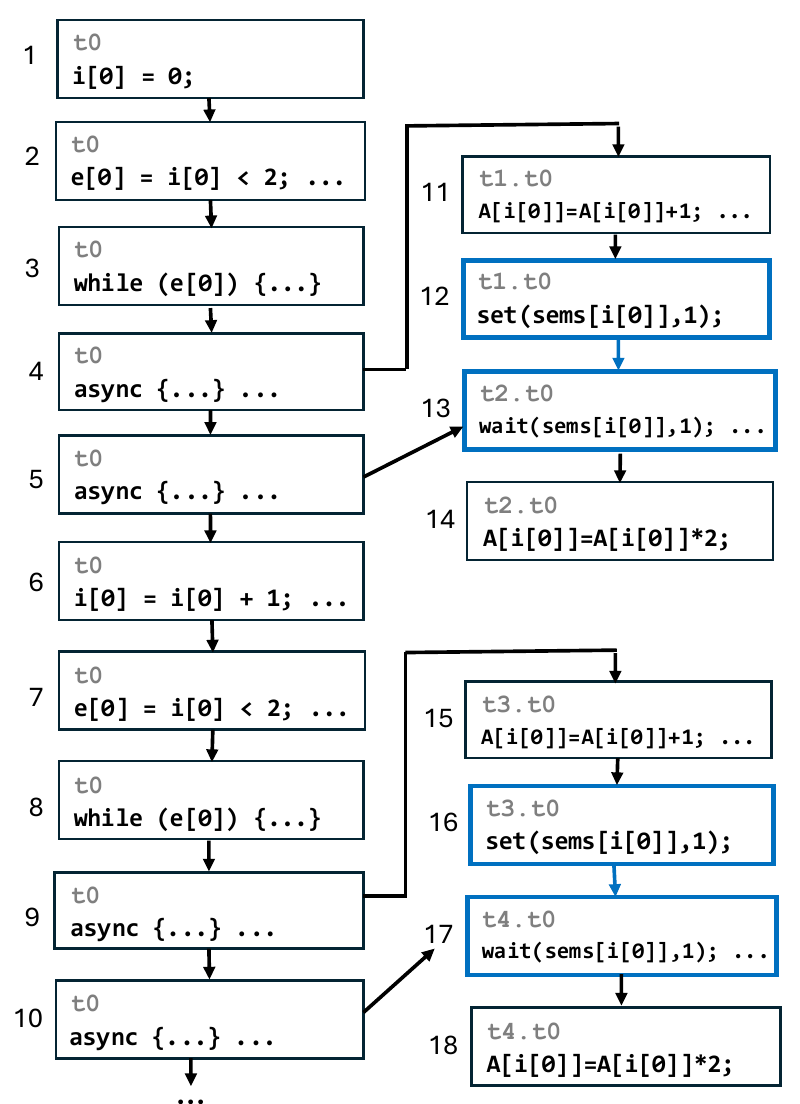}
\end{minipage}
\end{minipage}
\caption{\label{fig:formal-ex1}An example simple concurrent program, and the graph $G$ representing its happens-before relation. Nodes and edges in blue represent explicit synchronization.}
\end{figure}

Intuitively, any pair of nodes that may execute in parallel must be checked for modifications to conflicting memory locations.
Illustrated below, we develop a small-step operational semantics to build $G$ and $M$ for the program, and the associated checks to be performed to detect possible races. The small-step operational semantics describes constructively the scope of parallel programs we can interpret and reason about for equivalence, and those we report an error for.
\begin{figure}[h!tb]
    \centering
    
    \begin{minipage}{0.44\textwidth}
        \centering
        $$
        \inference
        {n = (s, \texttt{int }v[];P) \\ G' = G \cup \{n \hb (s, P)\} }
        {(M, G) \pstrans{n} (\memdecl(M, s, v), G')}
        $$
    \end{minipage}
    \begin{minipage}{0.54\textwidth}
        \centering
        $$
        \inference
        {n = (s, v[e_i] = e; P) & \expeval{e_i}{c} & \expeval{e}{e'} \\ G' = G \cup \{n \hb (s, P)\} }
        {(M, G) \pstrans{n} (\memupdate(M, s, v, c, e'), G')}
        $$
    \end{minipage}
    
    \vspace{0.5em}
    
    \begin{minipage}{0.44\textwidth}
        \centering
        $$
        \inference
        {n = (s, \texttt{while}(e)\{P'\};P) & \expeval{e}{c} \\ c = 0 & G' = G \cup \{n \hb (s, P)\} }
        {(M, G) \pstrans{n} (M, G')}
        $$
    \end{minipage}
    \begin{minipage}{0.54\textwidth}
        \centering
        $$
        \inference
        {n = (s, \texttt{while}(e)\{P'\};P) & \expeval{e}{c} & c \neq 0 \\ G' = G \cup \{n \hb (s, P';\texttt{while}(e)\{P'\};P)\} }
        {(M, G) \pstrans{n} (M, G')}
        $$
    \end{minipage}
    
    \vspace{0.5em}
    
    \begin{minipage}{0.48\textwidth}
        $$
        \inference
        {n = (s, \texttt{while}(e)\{P'\};P) \\ \expeval{e}{sym(sy)}\lor\expeval{e}{err}}
        {(M, G) \pstrans{n} error(G)}
        $$
    \end{minipage}
    \begin{minipage}{0.48\textwidth}
        $$
        \inference
        {n = (s, v[e_i] = e; P) \\ \expeval{e_i}{sym(sy) \lor \expeval{e_i}{err}} \lor \expeval{e}{err} }
        {(M, G) \pstrans{n} error(G)}
        $$
    \end{minipage}
    
    \vspace{0.5em}
    
    $$
    \inference
    {n = (s, \texttt{async}\{P'\};P) & id = fresh\_id(G) \\ G' = G \cup \{n \hb (s, P), n \hb (id \cdot s, P')\} }
    {(M, G) \pstrans{n} (M, G')}
    $$
    
    \vspace{0.5em}
    
    \begin{minipage}{0.39\textwidth}
        \centering
        $$
        \inference
        {n = (s, \texttt{set}(sem, e);P) \\ G' = G \cup \{n \hb (s, P)\} }
        {(M, G) \pstrans{n} (M, G')}
        $$
    \end{minipage}
    \begin{minipage}{0.59\textwidth}
        \centering
        $$
        \inference
        {n = (s, \texttt{wait}(sem, e);P) & \corrset(G, n, n_{set}) \\ G' = G \cup \{n \hb (s, P)\} \cup \{n_{set} \hb n\} }
        {(M, G) \pstrans{n} (M, G')}
        $$
    \end{minipage}
    
    \vspace{0.5em}
    
    \begin{minipage}{0.34\textwidth}
        \centering
        $$
        \inference
        {n = (s, \texttt{release}(sem, e);P) \\ G' = G \cup \{n \hb (s, P)\} }
        {(M, G) \pstrans{n} (M, G')}
        $$
    \end{minipage}
    \begin{minipage}{0.64\textwidth}
        \centering
        $$
        \inference
        {n = (s, \texttt{acquire}(sem, e);P) & \expeval{e}{c} \\ \sum\{r_{val} \mid \corrrel(G, n, r) \}  = c \\ G' = G \cup \{n \hb (s, P)\} \cup \{r \hb n \mid \corrrel(G, n, r)\} }
        {(M, G) \pstrans{n} (M, G')}
        $$
    \end{minipage}
    
    \vspace{0.5em}
    
    $$
    \inference
    { n = (s, \texttt{op}(sem, e);P) & \texttt{op} \in \{\texttt{set}, \texttt{wait}, \texttt{release}, \texttt{acquire}\} \\
      (\expeval{sem}{\textit{sym}(sy)} \lor \expeval{sem}{\textit{err}} \lor \expeval{e}{\textit{sym}(sy)} \lor \expeval{e}{\textit{err}})
    }
    {(M, G) \pstrans{n} \textit{error}(G)}
    $$

    \begin{minipage}{0.48\textwidth}
    $$
    \inference
    { 
    (M, G) \pstrans{n} (M', G') \\
    \taskleaf(G, n) & \graphvalid(G') 
    }
    {(M, G) \strans (M', G')}
    $$
    \end{minipage}
    \begin{minipage}{0.48\textwidth}
    $$
    \inference
    { 
    (M, G) \pstrans{n} (M', G') & \lnot\graphvalid(G')
    }
    {(M, G) \strans error(G')}
    $$
    \end{minipage}
    \vspace{0.5em}
    $$
    \inference
    { 
    (M, G) \pstrans{n} error(G')
    }
    {(M, G) \strans error(G')}
    $$

\vspace{-.3cm}
    \caption{Statement evaluation semantics.}
    \label{fig:statement-rules}
\vspace{-.5cm}
\end{figure}

Program semantics are defined in terms of two separate relations, shown in Fig.~\ref{fig:statement-rules}.
An intermediate relation, $\pstrans{} \subseteq State \times Node \times State$, defines potential transitions.
A state that potentially transitions from state $\sigma$ to $\sigma'$ by executing the first statement in node $n$ is denoted $\sigma \pstrans{n} \sigma$.
The relation $\rightarrow \subseteq State \times State$ represents the actual program semantics, a restriction of $\pstrans{}$ that only transitions on valid nodes.
We use $\strans^{*}$ to denote the reflexive transitive closure of $\strans$.

For a statement to be evaluated, it must be the most recent node added with its task ID.
This property is represented by a helper function $\taskleaf$: $\taskleaf(G, n)$ is true for node $n$ if and only if, for all other nodes $m \in G$, $n \hb m$ implies that $n$ and $m$ have different task IDs.
If $n \parallel m$, they must have different task IDs: statements are executed sequentially inside tasks, and different tasks always have different task IDs.

The rules in Figure~\ref{fig:statement-rules} build the state transition relation as follows.
The variable declaration rule declares a variable in the current task's scope using $\memdecl$.
The assign rule updates $M$ with a new value for the variable $v$ at the concrete index value $c$.
The subsequent two rules assign the next statements to be executed after a \texttt{while} statement according to the value of the condition $e$.
The two rules after these model errors due to symbolic values used in control flow or memory accesses.

The rest of the $\pstrans{}$ rules handle semaphore operations.
\PEQC and its semantics support two types of semaphores: counting semaphores, which use the operations \texttt{acquire} and \texttt{release}, and binary semaphores, which use the operations \texttt{set} and \texttt{wait}.
All of these functions take two arguments: $sem$, the ID of the semaphore the operation is being performed on, and a concrete value $v$.
Both semaphores are restricted in their operation to maintain determinism, and only manipulate the happens-before graph $G$ rather than maintaining an internal counter.

Counting semaphores in \PEQC mimic the semaphores typical to operating systems.
The operation  \texttt{release}$(sem,v)$ is a signal that increments the value of $sem$ by $v$, and the \texttt{acquire}$(sem,v)$ operation blocks the execution of the task that invokes it until the value of $sem \geq v$, then decrements $sem$ by $v$.
These operations are represented through graph operations which match \texttt{release} and \texttt{acquire} nodes.
For \PEQC to evaluate programs deterministically, we restrict \texttt{release}s: they can be matched to only one \texttt{acquire}, which requires that they sum exactly to the \texttt{acquire} value $v$ rather than being greater than or equal to $v$.

For binary semaphores in \PEQC, 
on a call to \texttt{wait}$(sem, v)$, a binary semaphore $sem$ pauses the execution of its task until another task calls \texttt{set}$(sem, v)$.
Similarly to counting semaphores, to ensure determinism, we restrict the use of binary semaphores such that 
two concurrent calls to \texttt{set} which correspond to one \texttt{wait} are disallowed.

Figure~\ref{fig:formal-ex1} shows an example of binary semaphores. Inside the loop body, \texttt{wait} operations ensure an ordering exists between the two statements that access \texttt{A[i[0]]}, because threads $(t2 \cdot t0)$ and $(t4 \cdot t0)$ cannot proceed until after threads $(t1 \cdot t0)$ and $(t3 \cdot t0)$, respectively, call \texttt{set}.

To maintain the correct happens-before relation over statements, we need to explicitly add synchronization from semaphores to the happens-before graph.
A semaphore \texttt{wait} must happen after its corresponding \texttt{set}, and a semaphore \texttt{acquire} must happen after its corresponding \texttt{release}s.
To add these edges to $G$, we define helper functions that define whether a \texttt{set} and \texttt{release} correspond to a \texttt{wait} and \texttt{acquire}, respectively.

For a node $n$ where $\ncode{n} = \texttt{op}(sem, e)$ such that $\texttt{op} \in \{\texttt{set}, \texttt{wait}, \texttt{release}, \texttt{acquire}\}$, we say that $n_{sem} = v_s$ if $\expeval{sem}{v_s}$ and that $n_{val} = v_e$ if $\expeval{e}{v_e}$.
We define a helper function $\corrset$ to calculate whether a \texttt{set} corresponds to a \texttt{wait}.
$\corrset(G, w, s) = true$ if and only if $w$ is a \texttt{wait} in $G$, $s$ is a \texttt{set} in $G$, and $w_{sem} = s_{sem}$ and $w_{val} = s_{val}$, and $s$ happens before or in parallel with $w$: $(w \parallel s \lor s \hb w)$.
Additionally, $s$ must be the most recent \texttt{set} with respect to $w$: there must exist no \texttt{set} $s' \in G$ such that $(w_{sem} = s'_{sem} \land s \hb s' \land (w \parallel s' \lor s' \hb w))$.
For example, in Figure~\ref{fig:formal-ex1}, $\corrset(G, n_{13}, n_{12}) = \corrset(G, n_{17}, n_{16}) = true$.

Similarly, we define a helper function $\corrrel$ for \texttt{release}s and \texttt{acquire}s.
$\corrrel(G, a, r) = true$ if and only if $a$ is an \texttt{acquire} in $G$, $r$ is a \texttt{release} in $G$, $a$ and $r$ operate on the same semaphore ($a_{sem} = r_{sem}$), $r$ happens before or in parallel with $a$ ($a \parallel r \lor r \hb a$), and there does not exist an acquire $a'$ on the same semaphore such that $r \hb a' \hb a$.
Unlike binary semaphores, counting semaphore \texttt{release}s increment the semaphore rather than overwrite it, so there is no need to restrict to the ``most recent'' \texttt{release}.

An important caveat to these semantics is that one semaphore ID can only be used as \emph{one} type of semaphore: programs must not, for example, contain \texttt{set}$(s, e)$ and \texttt{acquire}$(s', e)$, where $s$ and $s'$ evaluate to the same value.

\subsubsection{Graph Validity}

The relation $\rightarrow$ also requires that, for an edge $(M, G) \strans (M', G')$ to exist, $G'$ must be \emph{valid}.
Graph validity is defined in the helper function $\graphvalid$ below.

In state $(M, G)$, a node $n = (v[e_i] = e; P)$ is a \emph{write} to $v[c]$ if $\expeval{e_i}{c}$.
A node $n = (v[e_i] = e; P)$ is a \emph{read} from $v'[c]$ if there exists a subtree of the form $v'[e_i']$ such that $\expeval{e_i'}{c}$.
A node $n$ is an \emph{access} to $v[c]$ if it is either a read from or a write to $v[c]$.
Note that nodes containing semaphore operations can read from variables but not write to them.

For a graph $G$, $\graphvalid(G)$ is true if and only if:

\begin{enumerate}

    \item \label{gv:race} No two accesses, where one is a write, may happen in parallel: there exist no two nodes $n, m \in G$ such that $n \parallel m$, and there exists no location $v[c]$ where both $n$ and $m$ access $v[c]$ and either $n$ or $m$ is a write to $v[c]$.
    
    \item \label{gv:double_set} No \texttt{wait} has more than one corresponding \texttt{set}: there exists no \texttt{wait} $w \in G$ such that, for two \texttt{set}s $s, s' \in G$, $\corrset(G, w, s) \land \corrset(G, w, s')$.

    \item \label{gv:double_acquire} No \texttt{release} has more than one corresponding \texttt{acquire}: there exists no \texttt{release} $r \in G$ such that, for two \texttt{acquire}s $a, a' \in G$, $\corrrel(G, a, r) \land \corrrel(G, a', r)$.
    
    \item \label{gv:acquire_par} No \texttt{acquire}s to the same semaphore may happen in parallel: there exist no two \texttt{acquire}s $a, a' \in G$ such that $a_{sem} = a'_{sem} \land a \parallel a'$.

    \item \label{gv:acquire_sum} The sum of an \texttt{acquire}'s corresponding \texttt{release} values must not be greater than the value of the \texttt{acquire}: there exists no \texttt{acquire} $a \in G$ such that $\sum\{r_{val} \mid \corrrel(G, a, r)\} > a_{val}$.
    
\end{enumerate}

For the example graph $G$ in Fig.~\ref{fig:formal-ex1}, $\graphvalid(G) = true$, but if the semaphore synchronization with nodes 12, 13, 16, and 17 were not present, there would be no edges between nodes 11/14 and 15/18 that access the same index of \texttt{A}, which would cause the graph to be invalid due to a data race.

\subsubsection{Program Evaluation}

To evaluate a program $P$, we first assign it an initial state $(M^P_0, G^P_0)$. 
The starting graph, $G^P_0$, contains one node that assigns $P$ to a single initial task with ID $id_0$: $G^P_0 = \{(id_0, P)\}$. 
The starting memory state, $M^P_0$, has no variables defined: $\mathrm{dom}(M^P_0) = \emptyset$.

We define a program evaluation relation $\progeval$ such that $P \progeval \sigma$ if and only if $(M^P_0, G^P_0) \strans^{*} \sigma$ and $\sigma \not\strans \sigma'$ for any state $\sigma'$. 
We call $\sigma'$ a \emph{final state}.

\subsection{Determinism in Programs}
\label{subsec:determinismproof}

To prove equivalence between two programs using their final states, we first need to prove that a final state is \emph{unique}, i.e., that for every program $P$, only one state $\sigma$ exists such that $P \progeval \sigma$.
This means that program interpretation is deterministic, i.e., will always produce the same end state for every program.
To prove the determinism of our semantics, we follow a similar approach to prior work on deterministic concurrent systems \cite{jin23, budimlic10} and prove confluence of our semantics.

For any state $\sigma$, let $\sigma.G$ be the state's memory graph $G$, whether the state is of the form $(M, G)$ or $error(G)$.
Similarly, let $\sigma.M$ be the (non-error) state's memory $M$.
We use $\sigma \strans_n \sigma'$ as a shorthand for $\sigma \pstrans{n} \sigma' \land \sigma \strans \sigma'$.

\begin{lemma}[Weak Progress]
\label{lem:wprog}
Let $\sigma$ be a state where $\sigma_0 \rightarrow^{*} \sigma$, where $\sigma_0$ is some starting state.
If $\sigma_0 \strans_n \sigma_n$ and $\sigma_0 \strans_m \sigma_m$, then either $\sigma_m = error$ or there exists a state $\sigma_{mn}$ such that $\sigma_0 \strans_m \sigma_m \strans_n \sigma_{mn}$.
\end{lemma}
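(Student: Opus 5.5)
We prove the statement by case analysis on the rule used to derive $\sigma_0 \pstrans{n} \sigma_n$. The aim is to show that, whenever $\sigma_m$ is not an error state, node $n$ is still enabled in $\sigma_m$. We may assume $n \neq m$, since otherwise the claim is immediate by reusing the step. Both $n$ and $m$ satisfy $\taskleaf(\sigma_0.G, \cdot)$. Because they are distinct task leaves, they belong to different task IDs, and neither has an outgoing edge in $\sigma_0.G$ to a node of its own task.

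\textbf{Step 1: $n$ remains a task leaf.} Every rule only \emph{adds} nodes and edges, so $\sigma_0.G \subseteq \sigma_m.G$. The edges added by the $m$-step are of the following kinds:
\begin{itemize}
\item $m \hb (s_m, P)$, which stays in $m$'s task;
\item $m \hb (id\cdot s_m, P')$, which goes to a fresh task;
\item synchronization edges $n_{set}\hb m$ or $r \hb m$, which point into $m$.
\end{itemize}
A synchronization edge could start at $n$ only if $n$ is itself a \texttt{set}/\texttt{release} node. Even then its target is $m$, which lies in a different task, so the leaf condition is preserved. Hence $\taskleaf(\sigma_m.G, n)$ holds.

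\textbf{Step 2: the premises of the $n$-rule still evaluate the same way.} Expression evaluation depends on $M$, and $\sigma_m.M$ differs from $\sigma_0.M$ only by the write (or declaration) performed at $m$. If $n$ read a location that $m$ writes, then, since $n \parallel m$ in $\sigma_m.G$, the valid-graph race condition (item~\ref{gv:race}) would fail. That would make $\sigma_m$ an error, contradicting our assumption. Declarations by $m$ are scoped to $m$'s task, or are idempotent, so $\getvar$ for $n$ is unchanged. Therefore every $\expeval{\cdot}{\cdot}$ premise of the $n$-rule yields the same result in $\sigma_m$, and assignment, declaration, \texttt{while}, \texttt{async}, \texttt{set}, and \texttt{release} steps remain derivable under $\pstrans{n}$. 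If the $n$-step was itself an error step (a symbolic subscript, condition, or semaphore argument), then the same error derivation applies in $\sigma_m$. This is still a $\strans$ step, since error transitions need neither $\taskleaf$ nor validity.

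\textbf{Step 3 (main obstacle): \texttt{wait}/\texttt{acquire} and graph validity.} If $n$ is a \texttt{wait}, we must show that $\corrset(\sigma_m.G, n, n_{set})$ still holds for some set. The new node appended in $m$'s task could be a \texttt{set} on the same semaphore that is parallel to $n$. In that case either it becomes the new most-recent set, which still gives a witness, or it yields two corresponding sets. Two corresponding sets would violate item~\ref{gv:double_set}, but that violation produces a legitimate $\strans$ step to $error(G')$ via the $\lnot\graphvalid$ rule, so a successor $\sigma_{mn}$ still exists.

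If $n$ is an \texttt{acquire}, $m$ may have introduced a new \texttt{release}. The sum $\sum\{r_{val}\mid\corrrel\}$ can then exceed $c$, and the \texttt{acquire} rule's equality premise fails. This is the delicate case. Newly added nodes in $m$'s task are not yet executed, since only $m$'s own statement is "executed" by its step. I would therefore argue that a \texttt{release} counts toward $\corrrel$ only once its node has been stepped. The alternative route is to show that any such overshoot is caught by item~\ref{gv:acquire_sum}, which again yields an error transition rather than no transition.

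Finally, for the non-error $n$-successor, $\graphvalid$ of the combined graph either holds, giving $\sigma_m \strans_n \sigma_{mn}$ by the valid rule, or fails, giving $\sigma_m \strans_n error(\cdot)$ by the invalid rule. Either way $\sigma_{mn}$ exists. The real work is thus concentrated in the semaphore-matching premises of Step 3, which I would check rule by rule against items~\ref{gv:double_set}--\ref{gv:acquire_sum}.
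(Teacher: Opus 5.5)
Your skeleton matches the paper's: $n$ stays a task leaf, and only a \texttt{wait} or \texttt{acquire} can lose its enabling premise. Your Steps~1--2 are more careful than the paper's version. You also see what the paper's proof glosses over. The paper argues only that the needed \texttt{set}/\texttt{release} nodes remain in $\sigma_m.G$ because nodes are never removed, but $\corrset$'s most-recent clause and the acquire's exact-sum premise are not monotone in $G$.

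Your Step~3 dichotomy for \texttt{wait} is still incomplete. A new \texttt{set} on the same semaphore with a \emph{different} value, appended after the witness, takes away the witness's most-recent status without itself corresponding. Item~\ref{gv:double_set} only forbids \emph{more} than one corresponding set, so nothing flags this. Concretely, run $\texttt{async}\{\texttt{set}(0,1);\texttt{set}(0,2)\};\ \texttt{async}\{\texttt{wait}(0,1)\}$ until both tasks are spawned. The leaves are $m=(\cdot,\texttt{set}(0,1);\texttt{set}(0,2))$ and $n=(\cdot,\texttt{wait}(0,1))$.
\begin{itemize}
\item $\corrset(\sigma_0.G,n,m)$ holds, so $n$ is enabled.
\item Stepping $m$ adds $m\hb m'$ with $m'=(\cdot,\texttt{set}(0,2))\parallel n$. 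Now $m$ no longer corresponds, and $m'$ has the wrong value.
\item All five validity items hold, so $\sigma_m\neq error$.
\item Yet no $\pstrans{n}$ step leaves $\sigma_m$: the \texttt{wait} rule needs a witness, and the error rule needs a symbolic argument.
\end{itemize}
So this is a counterexample to the lemma as formalized. Neither your argument nor the paper's can close it without amending $\corrset$ or $\graphvalid$.

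For \texttt{acquire}, declaring that only already-executed \texttt{release}s count changes the semantics rather than proving the lemma. Under the stated definitions, unexecuted leaves are nodes of $G$. A new \texttt{release} leaf on $n$'s semaphore therefore satisfies $\corrrel(\sigma_m.G,n,\cdot)$, and item~\ref{gv:acquire_sum} is checked on the unexecuted acquire $n$ itself. Assuming positive release values, the overshoot already makes $\graphvalid(\sigma_m.G)$ false, i.e.\ $\sigma_m=error$, which is the lemma's first disjunct. This is not an $n$-step to $error$. Once the sum differs from $c$ there is no $\pstrans{n}$ transition at all, so your closing valid/invalid dichotomy does not apply.

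A minor point: $n=m$ is not ``immediate by reusing the step'', since $n$ is no longer a task leaf in $\sigma_m$. The lemma must be read with $n\parallel m$, as the paper implicitly does.
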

\begin{proof}
All transitions $\rightarrow_m$ must only add edges \emph{to} nodes with the same scope as $m$.
We know that $n$ and $m$ have different task IDs because $n \parallel m$, so we can conclude that $n$ will still be a task leaf in $\sigma_m.G$.
Since $n$ is still a task leaf, the only way for $\strans_n$ to not be a valid transition out of $\sigma_m$ is if $n$ is a \texttt{wait} or \texttt{acquire} and $\sigma_m.G$ does not contain the necessary \texttt{set} or \texttt{release} nodes.
We know that $\sigma_0.G$ contains the necessary nodes and that no transitions remove nodes from the graph, so we can conclude that $\sigma_m \strans \sigma_{mn}$ for some $\sigma_{mn}$.
\end{proof}

\begin{lemma}[Graph Independence]
\label{lem:gindep}
For nodes $n$ and $m$, if $\sigma_0 \rightarrow_n \sigma_n$ and $\sigma_0 \rightarrow_m \sigma_m \rightarrow_n \sigma_{mn}$, then $\sigma_{mn} = error$ or $\sigma_{mn}.G \setminus\sigma_m.G = \sigma_n.G \setminus \sigma_0.G$.
If $\sigma_n = error$, then $\sigma_{mn} = error$.
\end{lemma}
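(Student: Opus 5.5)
The plan is to examine what a single transition $\rightarrow_n$ adds to the graph. Every $\pstrans{n}$ rule in Fig.~\ref{fig:statement-rules} produces a graph of the form $G \cup \Delta_n(G,M)$. The increment $\Delta_n$ has two parts: the successor edges $n \hb (s,P)$ (plus $n \hb (id\cdot s, P')$ for \texttt{async}), and the synchronization edges $n_{set}\hb n$ for \texttt{wait} or $\{r \hb n \mid \corrrel(G,n,r)\}$ for \texttt{acquire}. So the first claim, $\sigma_{mn}.G\setminus\sigma_m.G = \sigma_n.G\setminus\sigma_0.G$, reduces to showing $\Delta_n(\sigma_m.G,\sigma_m.M) = \Delta_n(\sigma_0.G,\sigma_0.M)$. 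Here we assume $\sigma_{mn}$ is not an error. As in Lemma~\ref{lem:wprog}, we may assume $n\neq m$, so $n$ and $m$ have different task IDs, since both are task leaves.

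We would argue this component by component.

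\textbf{Successor nodes.} These depend only on $n$'s code and on the concrete value of a \texttt{while} condition (or the choice of $fresh\_id$, which we take deterministic up to renaming). So we must show that every expression evaluated by $n$ gives the same result in $\sigma_0.M$ and $\sigma_m.M$. If $m$ wrote a location $v[c]$ that $n$ reads, then in $\sigma_{mn}.G$ the nodes $n$ and $m$ would be unordered. Indeed, $m$'s new edges go only to its own successors, or from earlier \texttt{set}/\texttt{release} nodes into $m$, so no path $m\hbstar n$ is created. The graph would then violate condition~\ref{gv:race} of $\graphvalid$, giving $\sigma_{mn}=error$.

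\textbf{Synchronization edges.} For a \texttt{wait} or \texttt{acquire} $n$, we must show that $\corrset(G,n,\cdot)$ and $\corrrel(G,n,\cdot)$ agree for $G=\sigma_0.G$ and $G=\sigma_m.G$. The only relevant change $m$ can make is adding a new \texttt{set} or \texttt{release} node, or adding edges that alter the $\parallel$/$\hb$ relations among existing nodes. A new edge always enters $m$ or $m$'s fresh successor, and never leaves an existing node other than into $m$'s task. Hence reachability among nodes already present in $\sigma_0.G$, including $n$, is unchanged.

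\textbf{New semaphore nodes.} Suppose the successor of $m$ is itself a \texttt{set}/\texttt{release} node $x$ that is parallel to $n$ and now satisfies $\corrset(\cdot,n,x)$. There are two cases.
\begin{enumerate}
\item For binary semaphores, either the old matching set is still matched, and condition~\ref{gv:double_set} fails after $n$ fires, giving an error; or $x$ replaces it as ``most recent'', which also requires two parallel candidate sets and yields a violation of condition~\ref{gv:double_set}.
\item For counting semaphores, an extra corresponding release makes the sum exceed $n_{val}$, violating condition~\ref{gv:acquire_sum}. Alternatively, the premise requiring the sum to equal $c$ fails, so no transition exists. That contradicts the hypothesis that $\sigma_m\rightarrow_n\sigma_{mn}$ is defined, unless it is an error.
\end{enumerate}
This step is the main obstacle. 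Making it airtight means carefully enumerating how a fresh pending node (which is not yet executed but is present as a node of $G$) can enter $\corrset$/$\corrrel$, and I would treat nodes by their code as the definition does.

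The second claim, that $\sigma_n=error$ implies $\sigma_{mn}=error$, we would handle by cases on why $n$ errs.
\begin{itemize}
\item \emph{Expression errors} (symbolic subscript, symbolic condition or semaphore argument, $err$). By the memory-agreement argument above, either the evaluation is unchanged, so $\rightarrow_n$ errs again, or a race is detected.
\item \emph{Invalidity of $G\cup\Delta_n$.} Each condition of $\graphvalid$ is monotone: adding nodes and edges that do not reorder existing nodes cannot remove a witness pair of parallel conflicting accesses, duplicate sets, parallel acquires, or an excess sum. Hence the witness in $\sigma_n.G$ persists in $\sigma_{mn}.G = \sigma_m.G\cup\Delta_n$, using the first part.
\end{itemize}
For the witnesses involving $\parallel$, monotonicity uses the fact established above: $m$'s edges create no path between pre-existing nodes.
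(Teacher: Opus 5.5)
Your argument is the paper's. Changes to $n$'s increment are reduced to (i) expressions $n$ evaluates, where a conflicting write by $m$ becomes a race, and (ii) synchronization partners contributed by $m$, which become double-\texttt{set} or over-sum violations. In both, $m$'s new edges all end at $m$ or at fresh nodes, so they cannot repair an invalidity caused by $n$. You only organize this by component of $\Delta_n$ rather than by statement kind.

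The step you flagged as the main obstacle fails as written. A fresh \texttt{set} $x$ displaces $n_s$ as ``most recent'' only if $n_s \rightsquigarrow^{*} x$. So the two sets are ordered, not parallel, and condition~2 is not violated: $n$ simply gets the single partner $x$ instead of $n_s$.

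For example, take \texttt{async\{set(0,1); set(0,1);\} async\{wait(0,1);\}}, let $\sigma_0$ be the state just after both spawns, and let $m$ be the first \texttt{set} $a_1$. Because $\mathit{corr\_set}$ may match the not-yet-executed node $a_1$, $\sigma_n$ adds $a_1 \rightsquigarrow n$. Executing $m$ instead creates the second \texttt{set} $a_2$, with $a_1 \rightsquigarrow a_2$ and $a_2 \parallel n$, so $\sigma_{mn}$ adds $a_2 \rightsquigarrow n$. All four graphs are valid, yet the increments differ.

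The same non-monotonicity of the ``most recent'' clause refutes your claim that every condition of $\mathit{graph\_valid}$ is monotone. A later same-semaphore \texttt{set} in $m$'s task, even one with a different value, can erase a double-match witness created by a \texttt{set} that $n$ introduces. Then $\sigma_n = error$ while $\sigma_{mn}$ is valid.

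The paper's proof does not close this case either. It asserts a double-\texttt{set} error whenever $m$ contributes a corresponding \texttt{set}, and it asserts that the only invalidity $m$ could repair is a race; both assertions fail on these examples. A more careful enumeration will not fix this: with $\mathit{corr\_set}$ and condition~2 as literally defined, the lemma is false here, and an airtight proof needs stronger definitions.

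A smaller point concerns your memory-agreement argument, which places the race in $\sigma_{mn}.G$. If $n$ is a \texttt{wait} or \texttt{acquire}, its own new synchronization edge can come from $m$'s successor and order $m$ before $n$, so no race appears there. It is cleaner to note that $n$ and $m$ are already nodes of $\sigma_0.G$, so a conflicting access between them already contradicts $\mathit{graph\_valid}(\sigma_0.G)$.
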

\begin{proof}

We consider two different cases based on the rule that produced the transition $\sigma_0 \rightarrow_n \sigma_n$.

First, we consider the cases where $\sigma_0 \pstrans{n} error(\sigma_0.G)$.
This is the only case where $\sigma_n = error$ and $\graphvalid(\sigma_n.G)$, and corresponds to a \texttt{while} condition, a variable index containing a symbolic value, or a semaphore operation depending on a symbolic value.

In the case where $\sigma_n = \sigma_{mn} = error$, the lemma trivially holds. 

We show by contradiction that a case where $\sigma_n = error$ and $\sigma_{mn} \neq error$ cannot exist. 
Since $\graphvalid(\sigma_n.G)$, there must exist a symbolic variable, $v[c]$, which is read in a loop condition expression (for \texttt{while}), an index expression (for assignments), or an argument (for semaphore operations) in $n$.
This means that $v[c]$ has not been written to in $\sigma_0$.
For $\sigma_{mn}$ to not be $error$, $m$ must be a write to $v[c]$ to make $v[c]$ defined in $\sigma_m.M$.
Since $n \parallel m$, this violates item~\ref{gv:race} of $\graphvalid$ causing a race condition, so $\lnot\graphvalid(\sigma_{mn}.G)$.
Therefore, $\sigma_{mn} = error$, contradicting the assumption that $\sigma_{mn} \neq error$.

Next, we consider the cases where $\sigma_0 \pstrans{n} \sigma_n$, where either $\graphvalid(\sigma_n.G)$ and $\sigma_n \neq error$, or $\lnot\graphvalid(\sigma_n.G)$ and $\sigma_0 \strans error$.
In both cases, $\sigma_m \neq error$.

If $n = (s, p;P)$ and $p$ is not an \texttt{async}, \texttt{while}, \texttt{wait}, or \texttt{acquire} statement, its evaluation must add only $\{n \rightarrow (s, P)\}$ to the graph according to the semantics, regardless of the previous memory or graph state. 
If $m$ is evaluated first, we know that $n$ can be evaluated from Lemma~\ref{lem:wprog} and that it will add the same node.
In this case, $\sigma_{mn}.G \setminus\sigma_m.G = \sigma_n.G \setminus \sigma_0.G = \{n \rightarrow (s, P)\}$.
We also must prove that $\sigma_n = error$ implies that $\sigma_{mn} = error$.
Since $\graphvalid(\sigma_m.G)$ and $\graphvalid(\sigma_0.G)$, the invalidity of $\sigma_n.G$ must result from the additions from $n$, which $\sigma_{mn}$ also contains.
The only way additions from $m$ could fix this invalidity is if $n$ causes a race from $\graphvalid$ item~\ref{gv:race} and $m$ added an edge that synchronized $n$ with the other statement(s) causing a race.
This is impossible because all transitions from $m$ can only add edges \emph{to} nodes with the same ID, and $n \parallel m$.

If $n = (s, \texttt{async}\{P'\};P)$, the reasoning is the same as above: the nodes added to the graph are independent of the memory state and $n$ must be a task leaf in both $\sigma_0$ and $\sigma_m$, so $\sigma_{mn}.G \setminus\sigma_m.G = \sigma_n.G \setminus \sigma_0.G = \{n \rightarrow (s, P), n \rightarrow (id \cdot s, P')\}$.

If $n = (s, \texttt{wait}(sem, e);P)$, 
there exists a node $n_s$ in $\sigma_0.G$ such that $\corrset(\sigma_0.G, n, n_s)$ because there exists a transition $\rightarrow_n$ from $\sigma_0$, and that $\lnot \corrset(\sigma_0.G, n, m)$ because $n \parallel m$ and $\graphvalid(\sigma_0.G)$. 
If $\sigma_m.G \setminus \sigma_0.G$ contains a node $m'$ such that $\corrset(\sigma_0.G, n, m')$, then $\sigma_{mn} = error$ because $n$ has more than one corresponding \texttt{set}, violating $\graphvalid$ item~\ref{gv:double_set}.
Otherwise, the \texttt{wait} can only add the following set of edges: $\sigma_{mn}.G \setminus\sigma_m.G = \sigma_n.G \setminus \sigma_0.G = \{n \rightarrow (s, P), n_s \rightarrow n\}$.

Similarly, if $n = (s, \texttt{acquire}(sem, e);P)$, there must exist a set of nodes $N_r \subseteq \sigma_0.G$ such that $\corrrel(\sigma_0.G, n, n_r)$ for all $n_r \in N_r$, and $|N_r| = c$ where $\expeval{e}{c}$ because there exists a transition $\rightarrow_n$ from $\sigma_0$.
If $\sigma_m.G \setminus \sigma_0.G$ contains a node $m'$ such that $\corrrel(\sigma_0.G, n, m')$, then $\sigma_{mn} = error$ because the sum of corresponding \texttt{release} values is greater than $c$, violating $\graphvalid$ item~\ref{gv:acquire_sum}.
If $\sigma_m.G \setminus \sigma_0.G$ contains a node $m'$ such that $m'$ is an \texttt{acquire} and $m' \parallel n$, then $\sigma_{mn} = error$ because two \texttt{acquire}s can happen in parallel, violating $\graphvalid$ item~\ref{gv:acquire_par}.
Otherwise, the \texttt{acquire} can only add the following set of edges: $\sigma_{mn}.G \setminus\sigma_m.G = \sigma_n.G \setminus \sigma_0.G = \{n \rightarrow (s, P)\} \cup \{n_r \rightarrow n \mid n_r \in N_r\}$.

Statements that use \texttt{while} are the only statements where the nodes that are produced by evaluation depends on the value of an expression. 
If $n = (s, \texttt{while}(e)\{P'\};P)$, we first consider the case where the set of variables read in the condition $e$ and the set of variables written to by $m$ are disjoint.
In this case, the values of $e$ when evaluated under $\sigma_n.M$ and $\sigma_{mn}.M$ are the same because $m$ does not write to any variables read in $e$, so the node added to the graph will be the same in $\sigma_n$ and $\sigma_{mn}$.
In the case where $m$ writes to a variable read in the condition $e$, then we know $\lnot\graphvalid(\sigma_{mn}.G)$ as per $\graphvalid$ item~\ref{gv:race}, so $\sigma_{mn}.G = error$.
\end{proof}

For two memory states $M$ and $M'$, we define a difference $M' - M$ that represents the changes to $M$ performed in $M'$.
If $(id, v) \in \mathrm{dom}(M')$ and either $(id, v) \not\in \mathrm{dom}(M)$ or $(id, v) \in \mathrm{dom}(M) \land M(id, v) \neq M'(id, v)$, $(M' - M)(id, v) = M'(id, v)$. 
We need to prove that changes to the memory state performed by statements that may happen in parallel are disjoint, which we formulate similarly to Lemma~\ref{lem:gindep}.

\begin{lemma}[Memory Independence]
\label{lem:mindep}
For nodes $n$ and $m$, if $\sigma_0 \rightarrow_n \sigma_n$ and $\sigma_0 \rightarrow_m \sigma_m \rightarrow_n \sigma_{mn}$, then $\sigma_{mn} = error$ or $\sigma_{mn}.M -\sigma_m.M = \sigma_n.M - \sigma_0.M$.
\end{lemma}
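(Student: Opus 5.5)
We proceed as in the proof of Lemma~\ref{lem:gindep}: we do a case analysis on the statement at the head of node $n$. Throughout, we use the fact that $n \parallel m$ in $\sigma_0.G$, so $n$ and $m$ carry different task IDs. By Lemma~\ref{lem:wprog}, $n$ is still a task leaf in $\sigma_m.G$. We may also assume $\sigma_{mn} \neq error$, since otherwise the statement holds trivially. This gives $\graphvalid(\sigma_{mn}.G)$. Moreover, by Lemma~\ref{lem:gindep}, the graph additions made by $n$ are identical in both orders, so in $\sigma_{mn}.G$ the node $n$ is still unordered with respect to $m$. Consequently, item~\ref{gv:race} of $\graphvalid$ forbids any location $v[c]$ that is written by one of $n, m$ and accessed by the other.

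First we dispose of the memory-neutral statements. For \texttt{while}, \texttt{async}, \texttt{set}, \texttt{wait}, \texttt{acquire} and \texttt{release}, the rules in Fig.~\ref{fig:statement-rules} leave $M$ unchanged. Hence $\sigma_n.M - \sigma_0.M = \emptyset$, and likewise $\sigma_{mn}.M - \sigma_m.M = \emptyset$. Next comes the declaration $\texttt{int } v[]$. Here $\memdecl(M, s, v)$ depends only on whether $(s,v) \in \mathrm{dom}(M)$. The key observation is that $m$ has a different task ID, so it can only declare pairs $(s', v)$ with $s' \neq s$, and no rule removes domain entries. It follows that $(s,v) \in \mathrm{dom}(\sigma_0.M)$ iff $(s,v) \in \mathrm{dom}(\sigma_m.M)$, and the difference is $\{(s,v) \mapsto \emptyset\}$ or empty in both orders.

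The main case is the assignment $n = (s, v[e_i] = e; P)$. We must show three things. First, $e_i$ evaluates to the same concrete $c$ under $\sigma_0.M$ and $\sigma_m.M$. Second, $e$ evaluates to the same $e'$ under both. Third, $\getvar$ resolves $v$ to the same declaring task. For the first two, we argue that every cell $v'[c']$ read during the evaluation of $e_i$ or $e$ is a read of $n$ in the sense of the $\graphvalid$ definition. If $m$ wrote such a cell, the race clause would make $\sigma_{mn}$ an error. Therefore $\vareval$ returns identical values (concrete, or $\bot$ yielding the same $sym(v'[c'])$) in both memories. Evaluation is a function of these values, so the results agree, by induction on the expression following Fig.~\ref{fig:expression-rules}. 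For $\getvar$, the scoping is determined by task ancestry, and a parallel $m$ can only declare in its own scope. A subtle point is that $m$ might declare a shadowing variable in a scope visible to $n$. We plan to rule this out because $m$'s task is not an ancestor of $n$'s task; $m$ is parallel and not an ancestor, since ancestors reach children through \texttt{async} edges. With these facts, $\memupdate$ modifies the same cell $(\getvar(\cdot), v)(c)$ to the same value.

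It remains to check that the differences coincide as functions, which is where the definition of $M' - M$ on whole arrays bites. $\sigma_n.M - \sigma_0.M$ maps the target array to $\sigma_0.M(\cdot)[c \mapsto e']$, whereas $\sigma_{mn}.M - \sigma_m.M$ maps it to $\sigma_m.M(\cdot)[c \mapsto e']$. These are equal only if $m$ did not modify that same array at another index. We expect this to be the main obstacle, since the race condition only protects the single cell $v[c]$. We would first try to handle it by reading the difference cell-wise, i.e.\ at granularity $(id, v, i)$, which is the intended meaning, so that only cell $c$ is compared. Alternatively, we would strengthen the claim to say that applying each difference to the other state yields equal memories. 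Either way, the race clause suffices to give the equality, and the edge case where $n$ writes the identical value already present (empty difference) is handled symmetrically in both orders.
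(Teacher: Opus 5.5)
Your case analysis matches the paper's proof. Memory-neutral statements give empty differences, the declaration case rests on $m$ having a different task ID, and the assignment case uses item~\ref{gv:race} to argue that $m$ wrote no cell that $n$ reads. Reading $M'-M$ cell-wise is also what the paper silently does: its proof writes the difference as the single update $c_i \mapsto e'$, and the proof of Theorem~\ref{th:diamond} takes unions of differences. So that concern is settled the same way.

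The step that fails is your dismissal of shadowing. It is not true that a node parallel to $n$ cannot lie in an ancestor of $n$'s task. The \texttt{async} rule makes the parent's continuation $(s,P)$ and the child's first node $(id\cdot s,P')$ two successors of one node, with no edge between them. Unless semaphores intervene, the parent's later nodes are therefore unordered with all nodes of the child and its descendants. The \texttt{async} edge orders the spawning node, not the parent's current leaf.

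Consequently $m$ can be an \texttt{int v[]} in an intermediate ancestor task, which changes what $\getvar$ returns for $n$. Item~\ref{gv:race} does not see this, because a declaration is neither a read nor a write. For instance, let the root execute \texttt{int v[]; v[0] = 5; async \{ async \{ int w[]; w[0] = v[0]; \}; int v[]; \};}. Take $m$ to be the middle task's \texttt{int v[]} and $n$ the innermost task's \texttt{w[0] = v[0]}; these are two parallel task leaves.
\begin{itemize}
\item Executing $n$ first stores $5$ into $w[0]$, since the lookup of $v$ reaches the root.
\item Executing $m$ first makes $v$ resolve to the middle task's empty array, so $n$ stores $\textit{sym}(v[0])$.
\end{itemize}
Both graphs are valid, so $\sigma_{mn} \neq error$, yet the two differences disagree.

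Under the nearest-declaring-ancestor reading of $\getvar$ (``child threads can access their parents' variables''), this case cannot be proved from the stated rules. You need an extra hypothesis, e.g.\ that names are never shadowed across tasks, or that a declaration of $v$ counts as a write to every cell of $v$ in item~\ref{gv:race}. The paper's proof does not treat this case either, since its assignment case only considers cells written by $m$. So you have found a real hole, but the argument you sketch for it does not close it.
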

\begin{proof}

We only consider $\sigma_{mn} \neq error$, where $\graphvalid(\sigma_{mn}.G)$ (otherwise the lemma trivially holds).
If $n$ is not a declaration or an assignment, $\sigma_{mn}.M -\sigma_m.M = \sigma_n.M - \sigma_0.M = \emptyset$ since no other transitions can modify the memory state.

If $n_{code} = (\texttt{int }v[];P)$, we only need to consider the case where $(\nid{n}, v) \not\in \mathrm{dom}(\sigma_0.M)$, since otherwise the transition $\strans_n$ does not modify the memory state.
We assume, for the purpose of contradiction, that $(\nid{n}, v) \in \mathrm{dom}(\sigma_m.M)$.
In this case, $m$ must be either a declaration or an assignment.
If $(\nid{n}, v)$ has not already been declared, the only way $m$ can access $(\nid{n}, v)$ is if $\nid{m} = \nid{n}$, because otherwise $\getvar$ will only return a variable in the global scope, not the variable associated with $\nid{n}$.
Since $n \parallel m$, $n$ and $m$ must have different task IDs, which contradicts our assumption.
We now know that $(\nid{n}, v) \not\in \mathrm{dom}(\sigma_m.M)$, so we can conclude that
$\sigma_{mn}.M -\sigma_m.M = \sigma_n.M - \sigma_0.M = \{((\nid{n}, v), \emptyset)\}$ from the definition of $\memdecl$.

If $n_{code} = (v[e_i] = e;P)$, we first consider the case where the set of variables read in the expressions $e$ and $e_i$ and the set of variables written to by $m$ are disjoint.
In this case, the values of $e$ and $e_i$ when evaluated under $\sigma_0.M$ and $\sigma_m.M$ are the same because $m$ does not write to any variables read in $e$, so $n$ will make the same modifications.
In this case, $\sigma_{mn}.M -\sigma_m.M = \sigma_n.M - \sigma_0.M = \{((\nid{n}, v), (c_i \mapsto e')\}$, where $\expeval{e_i}{c_i}$ and $\expeval{e}{e'}$.
In the case where $m$ writes to a variable read in the condition $e$, then we know $\lnot\graphvalid(\sigma_{mn}.G)$ because $n \parallel m$ as per $\graphvalid$ item~\ref{gv:race}, so $\sigma_{mn}.G = error$.
\end{proof}

\begin{theorem}[Diamond Property]
\label{th:diamond}
For state $\sigma = (M, G)$, if $\sigma \rightarrow_n \sigma'$ and $\sigma \rightarrow_m \sigma''$ then there exist $\sigma_c', \sigma_c''$ such that $\sigma' \rightarrow^{=} \sigma_c'$ and $\sigma'' \rightarrow^{=} \sigma_c''$ and $\sigma_c' = \sigma_c''$, where $\rightarrow^{=}$ is the reflexive closure of $\rightarrow$.
\end{theorem}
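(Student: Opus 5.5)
The proof proceeds by case analysis on whether $n = m$, whether the two nodes share a task ID, and whether either successor is an error state. Lemmas~\ref{lem:wprog}, \ref{lem:gindep} and \ref{lem:mindep} then close the diamond: the strategy is to take $\sigma_c' = \sigma'_{m}$ (execute $m$ after $n$) and $\sigma_c'' = \sigma''_{n}$ (execute $n$ after $m$), and show these coincide.

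\emph{Trivial cases.} If $n = m$, the rules of Fig.~\ref{fig:statement-rules} are deterministic for a fixed node: expression evaluation is a function of $(M,s)$, $fresh\_id$ is a function of $G$, and $\corrset$/$\corrrel$ are functions of $G$. Hence $\sigma' = \sigma''$, and both closures are taken reflexively. If $n \neq m$, both are task leaves of $G$, so they carry different task IDs; since neither is reachable from the other through added edges, $n \parallel m$. This is the hypothesis needed for the three lemmas. If $\sigma' = error$ or $\sigma'' = error$, say $\sigma'' = error$, then by Lemma~\ref{lem:wprog} with the roles chosen appropriately, either we may take $\sigma_c'' = \sigma''$ and need $\sigma' \rightarrow^{=} error$, or $n$ can still fire after $m$. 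By the second clause of Lemma~\ref{lem:gindep} (an error from $n$ persists after $m$ fires first, and symmetrically for $m$), the other branch also steps to an error. Since all error states are identified, the diamond closes.

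\emph{Main case: both successors non-error and $n \parallel m$.} By Lemma~\ref{lem:wprog} applied in both orders, there exist $\sigma''_n$ with $\sigma'' \rightarrow_n \sigma''_n$ and $\sigma'_m$ with $\sigma' \rightarrow_m \sigma'_m$, unless one of them is an error, in which case the previous paragraph applies. If both are non-error, Lemma~\ref{lem:gindep} gives
\[
\sigma''_n.G \setminus \sigma''.G = \sigma'.G \setminus G
\quad\text{and}\quad
\sigma'_m.G \setminus \sigma'.G = \sigma''.G \setminus G .
\]
Therefore both $\sigma''_n.G$ and $\sigma'_m.G$ equal $G \cup \Delta_n \cup \Delta_m$, where $\Delta_n$ and $\Delta_m$ denote the edges and nodes added by $n$ and by $m$ respectively. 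Lemma~\ref{lem:mindep} similarly gives that each final memory equals $M$ overwritten by the diffs $\Delta^M_n$ and $\Delta^M_m$. It then remains to show these diffs commute. A diff is determined by $(id, v)$ and index, and non-commutation would require both $n$ and $m$ to write the same location $v[c]$. That is a write--write race between parallel nodes, violating item~\ref{gv:race}, so the resulting graph would be an error on both sides. Hence $\sigma'_m = \sigma''_n$.

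\emph{Main obstacle.} The delicate point is the "one side errors, the other does not" mixture. An example is the case where $G \cup \Delta_n \cup \Delta_m$ violates graph validity (e.g.\ $m$ adds a second corresponding \texttt{set} for the \texttt{wait} $n$, or a parallel \texttt{acquire}). Here I must check that \emph{both} orderings detect the invalidity. Since $\graphvalid$ is a predicate on the final graph alone, and both orderings produce the same final graph by Lemma~\ref{lem:gindep}, both orderings fail validity simultaneously and step to $error$. The subtle part is confirming that Lemma~\ref{lem:gindep}'s "$= error$ or equal diffs" dichotomy is symmetric in $n$ and $m$. I plan to apply it twice, swapping roles, and use that $error(G)\equiv error(G')$.
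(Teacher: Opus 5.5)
Your proposal follows the paper's proof: the same split into both-error, one-error and no-error cases, Weak Progress to obtain the closing steps, the error-propagation clause of Graph Independence for the one-error case, and Graph and Memory Independence to show that both orders reach the graph $\sigma'.G\cup\sigma''.G$ (both erroring if it is invalid) with the merged memory. Your explicit $n=m$ case and your write--write-race argument that the two memory diffs can be merged are small refinements the paper leaves implicit. Like the paper, you read Graph Independence as giving equal graphs even when exactly one closing step errors, although its statement only guarantees that in the non-error case, so applying it twice does not by itself close that mixed subcase.
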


\begin{proof}
First, we know that $\sigma \neq error$ and that $\graphvalid(G)$. 
If $\sigma$ were $error$, there would be no outgoing transitions from $\sigma$.
$G$ must be valid because the starting graph for every program is valid, and all transitions preserve graph validity except transitions to $error$.
We also know $n \parallel m$. If $n$ and $m$ were ordered, only one would be a leaf and thus only one would be evaluable from $\sigma$.

If both $\sigma'$ and $\sigma''$ are $error$, $\sigma_c' = \sigma_c'' = error$.

If exactly one of $\sigma'$ and $\sigma''$ is $error$ (we assume wlog that $\sigma' = error$ and $\sigma'' \neq error$), $\sigma_c' = \sigma_c'' = error$.
From Lemma~\ref{lem:wprog}, we know that a transition $\sigma'' \rightarrow_n \sigma_c$ exists for some $\sigma_c$.
From Lemma~\ref{lem:gindep}, we know that $\sigma' = error$ implies $\sigma_c = error$, so $\sigma_c' = \sigma_c'' = error$.

If neither $\sigma'$ nor $\sigma''$ is $error$, then we pick $\sigma_c'$ and $\sigma_c''$ such that $\sigma' \rightarrow_m \sigma_c'$ and $\sigma'' \rightarrow_n \sigma_c''$ (we know that these states must exist from Lemma~\ref{lem:wprog}).
From Lemma~\ref{lem:gindep}, we know that $\sigma_c'.G \setminus \sigma'.G = \sigma''.G \setminus \sigma.G$ and $\sigma_c''.G \setminus \sigma''.G = \sigma'.G \setminus \sigma.G$.
We also know that $\sigma.G \subseteq \sigma'.G \subseteq \sigma'_c.G$ and $\sigma.G \subseteq \sigma''.G \subseteq \sigma''_c.G$, because all transitions can only add edges to the graph, not remove them.
From this, we know that $\sigma_c'.G = \sigma_c''.G = \sigma'.G \cup \sigma''.G$, and we define $G_c = \sigma'.G \cup \sigma''.G$.

If $\lnot\graphvalid(G_c)$, $\sigma_c' = \sigma_c'' = error$.
Otherwise, from Lemma~\ref{lem:mindep}, we know that $\sigma_c'.M - \sigma'.M = \sigma''.M - \sigma.M$ and $\sigma_c''.M - \sigma''.M = \sigma'.M - \sigma$.
Since there are no transitions $(M, G) \strans (M', G')$ such that $M \not\subseteq M'$, we know that $\sigma.M \subseteq \sigma'.M \subseteq \sigma'_c.M$ and $\sigma.M \subseteq \sigma''.M \subseteq \sigma''_c.M$.
From this, we know that $\sigma_c'.M = \sigma_c''.M = \sigma.M \cup (\sigma'.M - \sigma.M) \cup (\sigma''.M - \sigma.M)$.
We can now conclude that $\sigma_c' = \sigma_c''$.
\end{proof}

\begin{theorem}[Confluence]
\label{th:confluence}
For state $\sigma$, if $\sigma \rightarrow^{*} \sigma'$ and $\sigma \rightarrow^{*} \sigma''$ then there exist states $\sigma_c', \sigma_c''$ such that $\sigma' \rightarrow^{*} \sigma_c'$, $\sigma'' \rightarrow^{*} \sigma_c''$, and $\sigma_c' = \sigma_c''$.
\end{theorem}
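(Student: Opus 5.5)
The plan is to derive confluence from the Diamond Property (Theorem~\ref{th:diamond}) by the standard tiling argument: a relation whose one-step divergences close in at most one step on each side is confluent. Concretely, I would first prove an intermediate \emph{strip lemma}: if $\sigma \rightarrow^{=} \sigma_1$ and $\sigma \rightarrow^{k} \sigma_2$, then there exist $\tau_1,\tau_2$ with $\sigma_1 \rightarrow^{*} \tau_1$, $\sigma_2 \rightarrow^{=} \tau_2$ and $\tau_1 = \tau_2$. The proof is by induction on $k$. For $k=0$, take $\tau_1=\tau_2=\sigma_1$. For $k>0$, write $\sigma \rightarrow \sigma_2' \rightarrow^{k-1} \sigma_2$. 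If $\sigma_1 = \sigma$, the claim is immediate. Otherwise $\sigma \rightarrow_n \sigma_1$ and $\sigma \rightarrow_m \sigma_2'$, and Theorem~\ref{th:diamond} gives $\rho$ with $\sigma_1 \rightarrow^{=} \rho$ and $\sigma_2' \rightarrow^{=} \rho$. Applying the induction hypothesis to $\sigma_2' \rightarrow^{=} \rho$ and $\sigma_2' \rightarrow^{k-1} \sigma_2$ then closes the strip.

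I would then prove the theorem by induction on the length of $\sigma \rightarrow^{*} \sigma'$. Peel off one step at a time and apply the strip lemma against the whole derivation $\sigma \rightarrow^{*} \sigma''$. Each application turns the remaining divergence into a pair whose first leg is one step shorter, and the accumulated closing paths compose into $\sigma' \rightarrow^{*} \sigma_c'$ and $\sigma'' \rightarrow^{*} \sigma_c''$ with $\sigma_c' = \sigma_c''$. This is the usual Newman-free argument: since the diamond gives reflexive-closure joins, no termination assumption is needed.

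The main obstacle is the side conditions of Theorem~\ref{th:diamond}, which is stated only for non-error states $\sigma = (M,G)$ and implicitly for $n \neq m$. Three cases need handling.
\begin{enumerate}
\item If an intermediate state in the strip is an error state, it has no outgoing transitions. The divergence is then trivial, because one branch is stuck at $error$ and the other must reach $error$ as well. For this I would reuse the reasoning of the Diamond Property proof, where an error on one side forces an error after the other step, since all error states are identified.
\item If $n = m$, then the transition is determined: the $\pstrans{n}$ rules are functional given $(M,G)$, up to the choice of $fresh\_id$, which I would treat as canonical. So the two successors coincide.
\item The hypothesis that $G$ is valid in every reachable non-error state holds by the invariant noted in the Diamond Property proof.
\end{enumerate}
With these cases covered, the diamond applies at every tile, and the induction goes through.
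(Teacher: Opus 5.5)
Your proposal is correct and takes the same route as the paper. The paper derives semi-confluence (your strip lemma) from the Diamond Property by induction on one derivation, then confluence by induction on the other, citing Baader--Nipkow \S 2.7. Your explicit side-condition handling, especially the case $n = m$ that the paper's diamond proof skips by asserting $n \parallel m$, adds care the paper omits but is not a different argument.
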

\begin{proof}
Semi-confluence follows from Theorem \ref{th:diamond} via induction on the length of the path to $\sigma'$, from which confluence follows via induction on the length of the path to $\sigma''$ \cite[\S~2.7]{baader98}.
\end{proof}

\begin{theorem}[Determinism]
\label{th:determinism}
If $P \Downarrow \sigma$ and $P \Downarrow \sigma'$, $\sigma = \sigma'$. 
\end{theorem}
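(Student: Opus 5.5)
The proof will follow directly from Confluence (Theorem~\ref{th:confluence}) together with the definition of $\progeval$. Suppose $P \progeval \sigma$ and $P \progeval \sigma'$. By definition, both runs start from the same initial state $\sigma_0 = (M^P_0, G^P_0)$, so $\sigma_0 \strans^{*} \sigma$ and $\sigma_0 \strans^{*} \sigma'$. Moreover, both $\sigma$ and $\sigma'$ are final: there is no $\tau$ with $\sigma \strans \tau$, and likewise for $\sigma'$.

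The plan is to apply Theorem~\ref{th:confluence} with the common source $\sigma_0$. This yields states $\sigma_c, \sigma_c'$ with $\sigma \strans^{*} \sigma_c$, $\sigma' \strans^{*} \sigma_c'$ and $\sigma_c = \sigma_c'$. Since $\sigma$ has no outgoing transitions, the only path $\sigma \strans^{*} \sigma_c$ is the empty one, so $\sigma_c = \sigma$. The same argument gives $\sigma_c' = \sigma'$, and therefore $\sigma = \sigma_c = \sigma_c' = \sigma'$.

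The one point needing care is the meaning of equality for error states. By convention all $error(G)$ are identified, so if both final states are errors they are equal. The mixed case is ruled out by the argument above, because a final state can only be joined with a state that is reachable from it, namely itself. I would also check that Theorem~\ref{th:confluence} as stated genuinely covers paths ending in error states. The diamond property is stated for $\sigma = (M,G)$, which is harmless here because error states have no outgoing transitions and therefore never occur as the source of a branching point. This is the only step where I expect any subtlety, and it amounts to noting that the induction behind Theorem~\ref{th:confluence} only ever splits at non-error states.
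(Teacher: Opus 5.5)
Your proposal is correct and follows the paper's proof essentially verbatim: you apply Theorem~\ref{th:confluence} from the common initial state $\sigma_0$, and finality of $\sigma$ and $\sigma'$ forces both joining paths to be empty, so $\sigma = \sigma'$. Your extra remarks on error states are sound and only make explicit what the paper leaves implicit: all $error(G)$ are identified, and they have no outgoing transitions, so they never arise as branching points.
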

\begin{proof}
Let the initial state of $P$ be $\sigma_0$. 
From the definition of $\progeval$, $\sigma_0 \strans^{*} \sigma$ and $\sigma_0 \strans^{*} \sigma'$.
From Theorem \ref{th:confluence}, we know that there exists a $\sigma_c'$ such that $\sigma \rightarrow^{*} \sigma_c'$ and a $\sigma_c''$ such that $\sigma' \rightarrow^{*} \sigma_c''$. 
Since both $\sigma$ and $\sigma'$ are final states, $\sigma = \sigma_c' = \sigma' = \sigma_c''$.
\end{proof}

Determinism also means that deadlock is always detectable in our system.
We define deadlock as a non-error state with no outgoing transitions (that is "stuck") where there exists a node that is a task leaf, but has not finished executing its corresponding program.

\begin{definition}[Deadlock]
\label{def:deadlock}
A state $\sigma = (M, G)$ is deadlocked if it is a final state and there exists a node $n$ in $G$ such that $\taskleaf(n)$ and $\ncode{n} \neq \epsilon$.
\end{definition}

\begin{corollary}[Deadlock Detection]
If $P \Downarrow \sigma$ and $\sigma$ is deadlocked, every final state of $P$ deadlocks.
\end{corollary}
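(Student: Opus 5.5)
The corollary follows directly from Theorem~\ref{th:determinism}. Suppose $P \Downarrow \sigma$ with $\sigma$ deadlocked, and let $\sigma'$ be any final state of $P$, so that $P \Downarrow \sigma'$. By Theorem~\ref{th:determinism}, $\sigma' = \sigma$. Deadlock (Definition~\ref{def:deadlock}) is a property of the state alone: it depends only on $\sigma$ being a final non-error state of the form $(M,G)$, and on the existence of a task-leaf node $n \in G$ with $\ncode{n} \neq \epsilon$. So $\sigma'$ satisfies the same conditions and is deadlocked.

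The one point needing care is the notion of equality used in Theorem~\ref{th:determinism}. All error states are identified, but a deadlocked state is by definition a non-error state $(M,G)$. Hence a deadlocked $\sigma$ can equal $\sigma'$ only if $\sigma'$ is also a non-error state with the same graph $G$; it cannot be collapsed into $error(\cdot)$. We should state this explicitly, so that the equality gives us the same $G$ and the task-leaf witness $n$ carries over unchanged.

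For robustness, we can also note why the hypothesis of Theorem~\ref{th:determinism} is met. It requires both $\sigma$ and $\sigma'$ to be reached from the same initial state $(M^P_0, G^P_0)$ and to have no outgoing transitions, and that is exactly the definition of $\progeval$. There is no further obstacle: the corollary is essentially an unfolding of definitions once determinism is available. The substantive work, namely confluence via the diamond property, has already been done in Theorems~\ref{th:diamond} and~\ref{th:confluence}.
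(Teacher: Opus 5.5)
Your proposal is correct and matches the paper, which proves the corollary in one line as a direct consequence of Theorem~\ref{th:determinism}. Your extra remark, that a deadlocked state is non-error so the equality of final states cannot be an artifact of identifying all error states, is a harmless clarification the paper leaves implicit.
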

\begin{proof}
Follows from Theorem \ref{th:determinism}.
\end{proof}

\subsection{Equivalence between Programs}
\label{subsec:equivalence}

Our final theorem defines equivalence between programs. First, if two programs $P_A$ and $P_B$ have reached their final state without error, and the memory computed for each is identical, then they are equivalent, denoted $P_A \equiv P_B$.

\begin{theorem}[Strict program equivalence]
\label{th:equiv1}
Suppose that $P_A \Downarrow \sigma_A =(M_A, G_A)$, $P_B \Downarrow \sigma_B =(M_B, G_B)$, $\sigma_A \neq error$, and $\sigma_B \neq error$. If $M_A = M_B$, then $P_A \equiv P_B$.
\end{theorem}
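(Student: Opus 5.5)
The plan is to make precise what $P_A \equiv P_B$ means and then reduce the claim to Theorem~\ref{th:determinism} and the soundness of the expression evaluation of Fig.~\ref{fig:expression-rules}. I would take $P_A \equiv P_B$ to mean semantic equivalence: for every concrete assignment of values to the live-in cells (those read before being written, which evaluate to $\textit{sym}(n[i])$), every execution of $P_A$ and of $P_B$ under any interleaving consistent with the synchronizations ends with the same contents in every memory cell.

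The first step is to use Theorem~\ref{th:determinism}, which says the final states $\sigma_A$ and $\sigma_B$ are unique. The hypothesis $M_A = M_B$ is therefore a statement about \emph{the} final memory of each program, not about one particular run of the interpreter.

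The second step is a soundness lemma, proved by induction on the length of $\strans^{*}$. Fix a valuation $\rho$ of the live-in symbols. Every non-error state $(M,G)$ reached by the interpreter then corresponds to a concrete execution prefix in which every cell $v[c]$ with $M(v)(c) = e$ holds $\llbracket e\rrbracket_\rho$. Cells left undefined hold their live-in value.

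For the inductive step I go through the rules. Expression evaluation commutes with $\llbracket\cdot\rrbracket_\rho$ because $\mathrm{val\_add}$ and $\mathrm{val\_tern}$ implement the concrete operators with C semantics. Control flow and indices are always concrete, since otherwise the rules go to $error$, so they agree with every $\rho$. Semaphore edges in $G$ match the blocking behaviour of the real primitives, because $\corrset$ and $\corrrel$ are forced to be unique by conditions~\ref{gv:double_set}--\ref{gv:acquire_sum} of $\graphvalid$.

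The third step passes from the interpreter's single trace to arbitrary interleavings. Race-freedom, given by condition~\ref{gv:race} of $\graphvalid$, together with Lemmas~\ref{lem:gindep} and~\ref{lem:mindep}, shows that any concrete schedule respecting $\hbstar$ produces the same memory. Hence the concrete final memory of $P_A$ under $\rho$ is $\llbracket M_A\rrbracket_\rho$, and likewise $\llbracket M_B\rrbracket_\rho$ for $P_B$. Since $M_A = M_B$ syntactically, these are equal for every $\rho$, which gives $P_A \equiv P_B$.

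I expect the main obstacle to be this third step: tying the graph-based semantics to arbitrary real executions with genuinely counting semaphores, where a real $\texttt{acquire}$ could consume releases differently from the exact-sum matching. My first attempt would be to argue that every concrete schedule is a linearization of $G$. The argument is that the validity conditions forbid parallel $\texttt{acquire}$s on one semaphore and force the matched release sums to equal the acquired amount. Then the Diamond Property (Theorem~\ref{th:diamond}) applies, and commuting adjacent independent steps turns any such schedule into the interpreter's trace. A secondary subtlety is the aliasing of task-scoped variables through $\getvar$, since equal $M$ must mean equal observable outputs. I would handle it by restricting attention to the global-scope (output) variables, which both programs declare identically.
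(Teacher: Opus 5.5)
Your first step is the paper's entire proof. The paper never gives $P_A \equiv P_B$ a meaning independent of the interpreter. It uses Theorem~\ref{th:determinism} to make the final state unique, reads $M_A = M_B$ as ``every variable holds the same final value,'' and takes that to be what computing the same results means. Everything after your first step tries to prove a strictly stronger adequacy statement: that the symbolic interpreter correctly predicts every concrete concurrent execution under every valuation of the live-ins. The paper only asserts this informally, in the remark after Theorem~\ref{th:equiv2}, and never proves it. Your argument for it has a genuine gap.

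The gap is the bridge from $G$ to real schedules, i.e.\ your third step and the semaphore clause of your second. Lemmas~\ref{lem:gindep} and~\ref{lem:mindep} and Theorem~\ref{th:diamond} are statements about the interpreter's relation $\rightarrow$, and the paper defines no concrete semantics. Rewriting a real schedule into the interpreter's trace therefore needs a simulation whose key ingredient is that a real \texttt{wait}/\texttt{acquire} can proceed exactly when $\mathit{corr\_set}$/$\mathit{corr\_rel}$ allows it. You assert this rather than prove it, and uniqueness of matches (items~\ref{gv:double_set}--\ref{gv:acquire_sum} of $\mathit{graph\_valid}$) does not supply it.

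Under natural readings of binary semaphores the claim is false:
\begin{itemize}
\item $\mathit{graph\_valid}$ forbids a \texttt{wait} with two \texttt{set}s, but not a \texttt{set} with two \texttt{wait}s. So two parallel \texttt{wait}$(s,1)$ matched to a single \texttt{set}$(s,1)$ are accepted, although a consuming semaphore blocks one of them forever.
\item The ``most recent'' clause of $\mathit{corr\_set}$ only inspects \texttt{set}s ordered after the matched one. Put \texttt{set}$(s,1)$, \texttt{set}$(s,2)$ and \texttt{wait}$(s,1)$ in three mutually parallel tasks. The interpreter adds the edge from \texttt{set}$(s,1)$ to the \texttt{wait} and terminates normally. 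Under overwrite semantics, the schedule \texttt{set}$(s,1)$; \texttt{set}$(s,2)$; \texttt{wait}$(s,1)$ deadlocks.
\end{itemize}
So ``every concrete schedule is a linearization of $G$'' cannot be established without restrictions the paper does not impose. The counting-semaphore case you single out is actually the tamer one, since parallel \texttt{acquire}s and doubly matched \texttt{release}s are rejected.

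There are two smaller issues. First, live-in symbols $\textit{sym}(n[i])$ are keyed by name only. A declared-but-unwritten task-local array and a global live-in with the same name therefore receive the same symbol. Restricting attention to global variables does not help, because that symbol can flow into them. Second, $\sigma_A \neq error$ does not exclude a deadlocked final state, so your ``every execution ends'' needs care there.
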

\begin{proof}
The final state is unique per Theorem~\ref{th:determinism}. If $M_A = M_B$, then for all $v$, we know $ M_A(v) = M_B(v)$. If all variables accessed by both programs hold the same final value, then $P_A$ computes the same results as $P_B$; that is, $P_A \equiv P_B$.
\end{proof}

Note that we can broaden beyond strict equality by designing a system to normalize $M$ before checking for equality of values, for example by deploying rewrite rules to normalize symbolic expressions (e.g., $sym(a*b) \equiv sym(b*a)$), as discussed below. 

Technically, Theorem~\ref{th:equiv1} is overly restrictive: only variables which are still visible when the program terminates need to have the same values, while temporary variables (e.gw., a loop counter) may have different values. In our implementation, we keep track of local variables (those local to tasks spawned by the program and therefore deleted when the task terminates without error), and only check equivalence for the remaining non-local variables. As we identify variables by name across programs, it is necessary for non-local variables to have identical names in both programs, a typical scenario when one program is a transformation of the other. 
This leads to Theorem~\ref{th:equiv2}, where we check the union of programs' non-local variables to detect side effects appearing in one program but not the other.

\begin{theorem}[Equivalence of non-local variables]
\label{th:equiv2}
Suppose $P_A \Downarrow \sigma_A = (M_A, G_A)$ and $P_B \Downarrow \sigma_B =(M_B, G_B)$, and that
$S = \{ v \mid M_A(v) = M_B(v) \}$ is the set of variables which are equivalent. Suppose that $V$ is the union of the non-local variables in $P_A$ and $P_B$: memory locations which are reachable after program termination. Then $V \subseteq S$ implies that $P_A \equiv P_B$.
\end{theorem}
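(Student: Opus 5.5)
The plan is to reduce Theorem~\ref{th:equiv2} to the argument already used for Theorem~\ref{th:equiv1}. The only change is that equality is now required only on the observable part of memory. The notion of equivalence $P_A \equiv P_B$ is implicitly that every memory location reachable after termination holds the same final (symbolic or concrete) value in both programs. So the first step is to make this precise. I would define the observable restriction $M|_V$ of a final memory $M$ to the set $V$ of non-local variables, identified across programs by name as the paper assumes. I would then state that $P_A \equiv P_B$ means $M_A|_V = M_B|_V$, where a variable absent from one memory counts as $\bot$ there.

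Second, I would invoke Theorem~\ref{th:determinism} to justify speaking of \emph{the} final memories $M_A$ and $M_B$. Since $P_A \Downarrow \sigma_A$ and $P_B \Downarrow \sigma_B$ are unique, the sets $S$ and $V$ are well defined and do not depend on the interleaving chosen during interpretation. As in Theorem~\ref{th:equiv1}, I would also note that $\sigma_A$ and $\sigma_B$ are non-error states, which is needed for the $.M$ components to exist. Third, for each $v \in V$, the hypothesis $V \subseteq S$ gives $M_A(v) = M_B(v)$, so $M_A|_V = M_B|_V$ pointwise. Variables outside $V$ are local to tasks that have terminated and are therefore unreachable after termination. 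Their values cannot be observed, so they are irrelevant to equivalence. Taking the union of both programs' non-local variables handles the asymmetric case: a side effect on a variable that only one program writes is still included in $V$. Since $V \subseteq S$, the other program must then hold the same value there, which can only be $\bot$ if it never touched the variable. Hence no observable side effect is missed.

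The main obstacle is the justification that local variables are truly unobservable. The featherweight semantics never deallocates: $\memdecl$ entries keyed by a child task ID persist in $M$. I would argue via $\getvar$ that a variable declared under task ID $id\cdot s$ is resolvable only by nodes whose IDs extend $id\cdot s$. Once that task's leaf has code $\epsilon$, no such node can ever be created. So those entries are outside the reachable locations that define $V$, and dropping them from the comparison loses nothing. Finally, I would remark that the symbolic values compared are functions of the live-in symbols $sym(n[i])$ only. Equality of these symbolic expressions therefore implies equality of the concrete results for every instantiation of the inputs, which is what $\equiv$ is meant to guarantee.
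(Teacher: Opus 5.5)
Your skeleton is the Theorem~\ref{th:equiv1} half of the paper's proof: uniqueness of the final state via Theorem~\ref{th:determinism}, pointwise equality on $V$, and taking the union of both programs' non-local variables to catch one-sided side effects. The other half carries the weight, and in your proposal it appears only as an unproved closing assertion. The paper invokes Theorem~1 of \cite{fpga24}, which shows that the final CDAGs can only contain symbols of \emph{non-local} variables. That fact is what turns syntactic equality of $M_A(v)$ and $M_B(v)$ for $v \in V$ into agreement of the two programs for every value of their inputs. The featherweight rules do not give it for free. A symbol $sym(n[i])$ is created by any read of an unwritten cell, including a cell of a local temporary. For instance, a spawned task may execute \texttt{int t[]; out[0] = t[0] + in[0];}, which stores $sym(t[0] + in[0])$ into the non-local \texttt{out}. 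If both programs do this, then $\texttt{out} \in S$ and $V \subseteq S$ can hold. Yet $sym(t[0])$ is not a shared input: it stands for two unrelated uninitialized cells, one per program. Since symbols are keyed by bare name, it could even coincide with the symbol of a non-local \texttt{t}. So ``the compared values are functions of live-in symbols only'' does not yield ``equal for every instantiation of the inputs'' unless every such symbol is known to belong to a non-local variable.

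The obstacle you single out does not address this. Your \texttt{get\_var} scoping argument is about whether local \emph{cells} are observable after termination, and the hypothesis already settles that by fixing $V$. The leak above happens while the declaring task is still live, and it concerns which symbols occur \emph{inside} the values of cells in $V$. The scoping claim is also inaccurate as stated: after a task's own leaf reaches $\epsilon$, its still-running \texttt{async} descendants, whose IDs extend its own, can still resolve and read its locals. To close the gap you have two options. You can import the non-local-symbol property as the paper does. Alternatively, you can add a hypothesis excluding reads of unwritten local cells, then prove by induction over $\rightarrow$ that every symbol ever stored in memory is $sym(n[i])$ for a non-local $n$. Without one of these, defining $P_A \equiv P_B$ as $M_A|_V = M_B|_V$ makes the theorem true by definition and leaves its semantic content unproved.
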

\begin{proof}
Follows from Theorem~\ref{th:equiv1} and Theorem~1 in \cite{fpga24} proving the final CDAGs can only contain symbols of non-local variables.
\end{proof}

We remark that equivalence only holds for the set $V$ in Theorem~\ref{th:equiv2}. $V$ may contain concrete values for some variables (e.g., \texttt{N=32}) and symbolic expressions for other
(e.g., their final value is stored as a CDAG 
\texttt{C[0][0]=...+(+(*(A[0][0],} \texttt{B[0][0]),*(A[0][1],B[1][0]),...)}).
If the programs are correct executable tests, all values are concrete in $V$, and the proof holds only for these concrete values. However, programs can also contain input non-local variables that are not written before being read by the program, which are automatically treated as symbolic. The equivalence result holds for any value symbolic variables may concretely take. However, for our hybrid interpretation to terminate, the control-flow needs to be deterministically computed. Typically, this requires fixed problem sizes (e.g., \texttt{N=32}). As shown in Sec.~\ref{sec:experiments}, for a large class of useful programs such as dense linear algebra and related AI workloads, only the problem size is fixed; the equivalence computation is performed on symbolic input data. This makes the correctness result highly practical and far more general than simple testing.

%% file: peqc-implementation.tex
We now present additional details on our full \PEQC implementation for correctness checking, before presenting how to translate and lower MLIR programs to \PEQC's IR (PIR) in Sec.~\ref{sec:mlirsupport}.

\subsection{From Featherweight language to PIR}

We proved in the prior Sec.~\ref{sec:formal} that any program processed to termination without error by our system is a deterministic concurrent program, free of races and deadlocks, and amenable to simple equivalence checking with other programs. Technically, this is restricted to two programs where one is a substitute for the other, i.e., they both have necessarily the same execution context. Details are provided in \cite{fpga24}. 

The featherweight language carries all the necessary key constructs (synchronizations, parallelism, control-flow), but it remains a subset of the complete language we support.
\PEQC supports a subset of the C language in terms of control-flow and arithmetic operations. They can either be translated immediately to the featherweight language, or are treated identically to constructs in it.

The \PEQC intermediate representation, that is, the language that \PEQC takes as input, contains all standard unary and binary operators from the C language, whose treatment is analogous to $e_1+e_2$. 
Blocks, for loops, and if-conditionals can be reconstructed from $\text{while} (e) \{ P \}$. 
Function calls require that user guarantees the function is pure (e.g., $min(a,b)$) or that the function definition exists in the file; control is transferred to its definition, possibly after assigning arguments. 
Goto statements are handled similarly.
We support general \texttt{for (init; test; increment) body;} loops in C, they can be immediately translated to \texttt{init; while (test) \{ body; increment;\}}; similarly for \texttt{if (e) \{...\}}. Function calls are processed using an inlining-style approach, and recursive functions are not supported. We support all operators of the C language (\texttt{+}, \texttt{*}, \texttt{/}, \texttt{!=}, \texttt{>{}>}, etc.) in the concrete evaluator, following C semantics.

In addition, we provide an API (functions and types) intercepted by \PEQC to represent concurrency constructs: \texttt{peqc\_semaphore\_wait()}, etc. for each of the operations in the featherweight language (\texttt{async}, \texttt{wait}, \texttt{set}, \texttt{acquire}, \texttt{release}) which behave as formalized in the prior section. We also provide a specific data type for the semaphores, enabling their tracking in the program easily.

PIR is rich enough to capture a rich set of concurrent programs, as exemplified in Sec.~\ref{sec:experiments}. Our prior work includes supporting a subset of OpenMP pragmas \cite{tucker2025verification} which are lowered to PIR using semantically equivalent \texttt{async} and semaphore synchronizations. Details and documentation are available directly in the \PEQC software on Github \cite{peqc-mlir-github}.

\label{subsec:implem-pir}

\subsection{Hybrid Interpretation and Concurrency Checking}

During interpretation, the happens-before graph $G$ is represented explicitly, with a node for each statement instance.
In practice, the size of this graph can be reduced significantly by combining nodes in $G$, reducing its size and improving the speed to query it for correctness checking.

If two statement instances do not contain semaphore operations or spawn a task, both nodes will only have one input and output edge.
The nodes corresponding to these statements can therefore be merged into a single "macro-node" during interpretation, to reduce the size of the graph and thus reduce the time to check the happens-before relationship between two nodes. We implemented this optimization in \PEQC.

\subsection{Equivalence Checking}

As shown in Sec.~\ref{sec:experiments}, the time to conduct equivalence checking is mostly negligible compared with the time to build the normalized memory representations using CDAGs \cite{fpga24}: we typically rely exclusively on exact matching between the two memory states: that is, computing whether CDAGs are identical, a simple linear-time process. To reduce memory space, we employ memory pooling and shadow copies, so that CDAGs used in multiple live-out memory cells are represented only once in memory, leading to linear-time equivalence checking. However, there are situations where the user may wish to deploy further normalization, for example to tolerate associative/commutative reorderings of operations, as described below.

\paragraph*{Normalization of symbolic expressions} 
\PEQC includes a rewrite-rule system that can be applied on CDAGs to normalize them prior to equivalence checking. This proves particularly useful to handle transformations that change the nature and count of symbolic operations, such as factorization/distribution for expressions. The user must ensure that these are semantically applicable on the program verified; we do not perform any such analysis. 
We did not use nor need any such custom rewrites in our experiments in Sec.~\ref{sec:experiments}.

We also support normalization of associative and commutative operations by sorting the tree lexicographically (average $O(n log(n))$) after merging associative operations under a $n$-ary node with all operations of that type, supporting \texttt{+,*,min,max} natively, similar to Eker et al. \cite{eker03}. 
We also offer simple normalizations, such as \texttt{/(\$x \$x) = 1}, \texttt{+(\$x 0) = \$x}, \texttt{*(\$x 0) = 0}, as well as their commuted variants.

%% file: mlirsupport.tex
We now present our approach to lowering and translation of a rich class of MLIR programs to our parallel language intermediate representation described in the previous section. 

Our approach to support MLIR program correctness verification is to translate a MLIR program made only of standard dialects into our intermediate program representation. Technically, \PEQC is equipped with a restricted parser for the C language, and exposes an API via specific function calls in the program which are intercepted by the interpreter to model task spawns and the various synchronizations we support. Equipped with this translator, we then rely on existing LLVM MLIR lowering passes for specific dialects (e.g., \MLIRDIAL{affine}) and develop our own MLIR lowering passes for the MLIR-AIR and MLIR-AIE dialects, presented in Secs.~\ref{subsec:mlir-air-convert} onward.

\begin{figure}[h!tb]
\includegraphics[width=12cm]{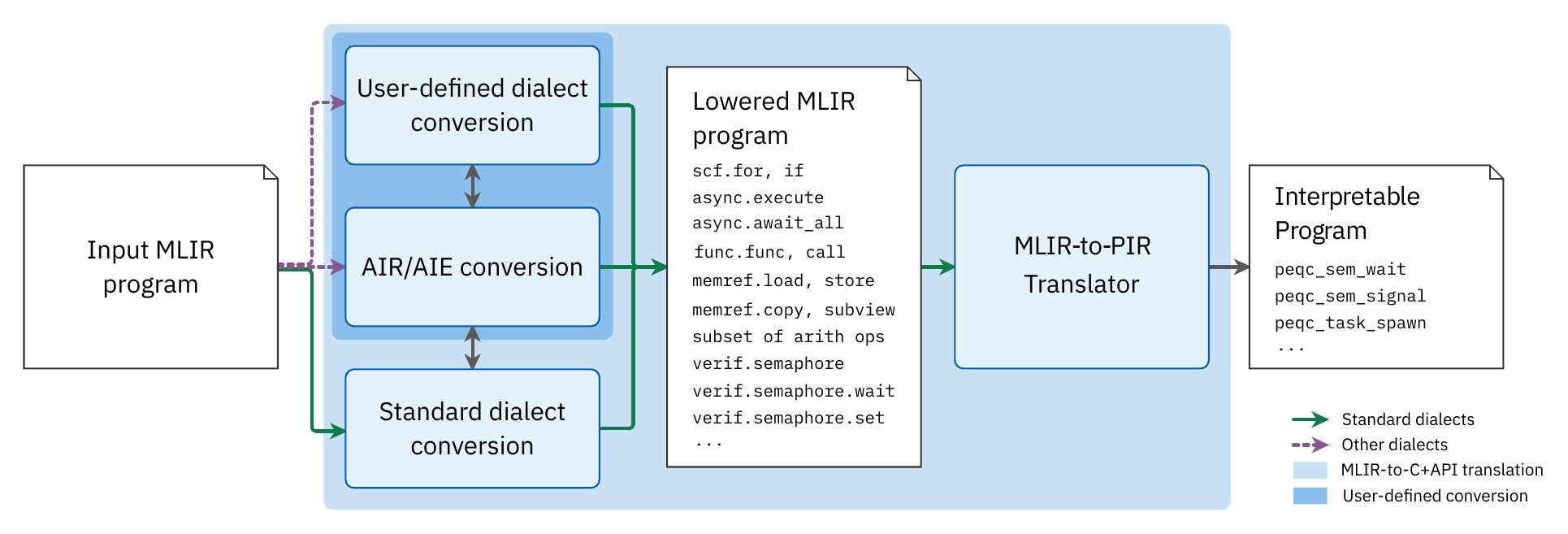}
\vspace{-.5cm}
\caption{\label{fig:flow2}Illustration of conversion and translation to PIR for eventual interpretation}
\vspace{-.35cm}
\end{figure}

\subsection{Translation to \PEQC Intermediate Representation} 
This stage takes as input an MLIR program made only of a subset of \MLIRDIAL{func}, \MLIRDIAL{arith}, \MLIRDIAL{math}, \MLIRDIAL{memref}, \MLIRDIAL{scf} and \MLIRDIAL{async}. It then translates it to a C-syntax AST,
which contains function calls to \PEQC's concurrent API when translating \MLIRDIAL{async}. As discussed below in Sec.~\ref{subsec:mlir-air-convert}, we also provide a dedicated \MLIRDIAL{verif} MLIR dialect in \PEQC, to offer the semaphore types and operations on them (\texttt{wait}/\texttt{set}/\texttt{acquire}/\texttt{release}) and facilitate encoding of concurrent constructs arising in MLIR dialects. The \MLIRDIAL{verif} dialect  translates directly to the \PEQC C API for concurrency.

We note that the simple route of converting an MLIR program to C does not work in general: the \texttt{emitc} dialect and associated tools do not support the programs we target, especially when foreign dialects are used. 

Translation of \MLIRDIAL{arith} and \MLIRDIAL{math} is trivial; we support 
25 operations to date. \MLIRDIAL{scf} and \MLIRDIAL{func} are similarly straightforward. In contrast, \MLIRDIAL{memref} and \MLIRDIAL{async} are more subtle. For \MLIRDIAL{memref}, we emulate operations such as \MLIRDIAL{copy}, \MLIRDIAL{subview}, etc., with a pointwise loop iterating over all elements to be copied. \MLIRDIAL{subview} is treated by creating a temporary variable, and inserting flushes to the original \MLIRDIAL{memref} before the original is next accessed. For \MLIRDIAL{async}, we translate tokens to semaphores and insert the corresponding \MLIRDIAL{set}/\MLIRDIAL{wait} as needed in operations that depend on the tokens, also producing a final semaphore when the \MLIRDIAL{async} operation concludes for subsequent use. We encode \MLIRDIAL{async.group} entries with distinct semaphores, one per member in the group. We do not implement a barrier construct, but instead explicitly create as many point-to-point synchronizations as needed. While this may not be the most efficient, this approach is flexible and allows us to seamlessly model the asynchronous AIR and AIE dialects, OpenMP programs, etc.

\subsection{Leveraging Existing MLIR Lowering Passes}

To maximize the set of MLIR programs we can take as input, we rely on existing lowering passes provided with the standard \texttt{mlir-opt} toolchain. Correctness verification of these lowering passes is outside the scope of this paper;
we focus on verification of programs here.
Specifically, we expect the user to correctly lower programs to a subset of standard dialects from \MLIRDIAL{func}, \MLIRDIAL{arith}, \MLIRDIAL{math}, \MLIRDIAL{memref}, \MLIRDIAL{scf}, and \MLIRDIAL{async}. For example, \texttt{-convert-linalg-to-affine-loops} converts \MLIRDIAL{linalg} operations (e.g., \MLIRDIAL{matmul}) to an expanded (e.g., the 3D loop nest for \MLIRDIAL{matmul}) MLIR \texttt{affine} program. Then \texttt{-lower-affine} lowers the \texttt{affine} dialect to \MLIRDIAL{scf} and \MLIRDIAL{memref}. 

To illustrate, our flow can take as input a \MLIRDIAL{linalg} program from PyTorch, its tiled, optimized, and parallelized variant produced by the MLIR-AIR toolchain, and successfully lower them for eventual translation and verification, as shown in Sec.~\ref{subsec:air-expe}. To achieve this, we designed our own conversion passes to standard dialects for two user-defined dialects: \MLIRDIAL{AIR} and \MLIRDIAL{AIE}. 

\subsection{Lowering of the MLIR-AIR Dialect}
\label{subsec:mlir-air-convert}

The MLIR-AIR dialect captures explicit asynchronous control, hierarchical parallelism, and dataflow between compute and memory, serving as a structured intermediate layer that bridges high-level abstractions and low-level accelerator code generation \cite{wang2025loop}.
While some lowering passes are provided for standard MLIR dialects, user-defined dialects such as AMD's AIR and AIE must also be lowered to standard dialects. We developed these converters for a large subset of AIR and AIE constructs.

Our open-source software contains extensive documentation on operations we support, our approach to translation, and the complete C++ implementation of these converters (built using the LLVM MLIR infrastructure \cite{mlir-web}). We illustrate below three constructs from the AIR dialect. Our implementation supports all AIR operations, albeit with some restrictions on arguments and options supported.

\paragraph*{Conversion of \MLIRDIAL{air.execute}} The \MLIRDIAL{async} dialect enables the description of sets of parallel tasks; tokens produced during task execution enable synchronizations between tasks spawned using \MLIRDIAL{async.execute}. 
We convert \MLIRDIAL{air.execute} to \MLIRDIAL{async.execute}, a process that involves analyzing the \MLIRDIAL{air.token}s used for synchronization in the AIR program and converting them to groups of \MLIRDIAL{async} tasks, and handling value results returned by parallel tasks. This is a form of lowering, as we encode the semantics of the source dialect operation via a new MLIR program fragment made only of standard operations that preserves the original execution semantics. In this case, AIR follows the semantics of an existing standard \MLIRDIAL{async} operation, making conversion straightforward.

\paragraph*{Conversion of \MLIRDIAL{air.herd}} Herds in the AIR dialect represent ``a one or two dimensional array of adjacent AIE cores executing the same code region" \cite{mlir-air-github}. In other words, one can create a parallel task for each AIE core modeled, and execute the same code block on each of them: this amounts to a parallel loop (nest). We therefore convert \MLIRDIAL{air.herd} to \MLIRDIAL{scf.parallel}, a construct capturing the parallelism and constraints of \MLIRDIAL{air.herd}.

\paragraph*{Conversion of \MLIRDIAL{air.channel}} AIR Channels are used to communicate data and synchronize between concurrent entities. In contrast to the above examples, channels do not follow existing constructs in standard dialects. Channels can have blocking read/write operations (e.g., behaving like a blocking FIFO), broadcast to multiple destinations, and optionally perform on-the-fly data layout transformations of the data pushed to or pulled from the channel. 
\vspace{-.25cm}
\begin{lstlisting}[basicstyle=\footnotesize,numbers=none]
// An array of 4 x 4 streaming DMA channels
air.channel @chan_0 [4, 4] {channel_type = "dma_stream"}
// A streaming DMA channel broadcasting to 4 destinations
air.channel @chan_1 [1, 1] {broadcast_shape = [1, 4], channel_type="dma_stream"}
// Receive a 4x4 tile into %dst from channel @chan_0
air.channel.get @chan_0(%dst[%c0, %c0][%c4, %c4][%c1, %c1]) :(memref<16x16xf32>)
// Asynchronous get with dependency on %t1
%t2 = air.channel.get async [%t1] @chan_1(%dst[%c8, %c0][%c4, %c4][%c1, %c1]) : 
      (memref<16x16xf32>)
\end{lstlisting}
To implement conversion of \MLIRDIAL{air.channel} to standard dialects, we developed a conversion using explicit \MLIRDIAL{memref} buffers to store the data (and implement required data layout transformations, by explicitly copying into the proper layout into a new buffer). We developed support for fine-grain synchronizations to model blocking reads/writes if the channel buffer is full. Unfortunately, \MLIRDIAL{async} tokens can only be written once; this prevents modeling semaphores. 
For instance, to model a binary semaphore, one would need to create one token per \emph{access} to the channel, a property immensely difficult to compute statically. To address this issue, we have developed a minimal \MLIRDIAL{verif} dialect, dedicated to handling semaphores (\MLIRDIAL{wait}, \MLIRDIAL{set}, \MLIRDIAL{acquire}, and \MLIRDIAL{release}) as well as the type \MLIRDIAL{verif.semaphore}. The \MLIRDIAL{verif} dialect has direct conversion to the \PEQC API. Consequently, \MLIRDIAL{air.channel}s are converted to blocks of code containing: \MLIRDIAL{memref}s to store data, loops to copy data into a different layout, one semaphore per channel, and \MLIRDIAL{wait}s/\MLIRDIAL{set}s around read/write operations.

\subsection{Lowering of the MLIR-AIE Dialect}
\label{subsec:mlir-aie-convert}

While MLIR-AIR focuses on higher-level constructs useful for representing computations on spatial architectures generally, MLIR-AIE is focused primarily on programming constructs that map cleanly to Neural Processing Units \cite{mlir-aie-github}.
We have implemented support for the following \MLIRDIAL{aie} operations: \MLIRDIAL{buffer}, 
\MLIRDIAL{core}, \MLIRDIAL{device}, \MLIRDIAL{dma\_bd}, \MLIRDIAL{flow}, \MLIRDIAL{lock}, \MLIRDIAL{mem}, \MLIRDIAL{memtile\_dma}, \MLIRDIAL{next\_bd}, \MLIRDIAL{objectfifo},  \MLIRDIAL{objectfifo.acquire}, \MLIRDIAL{objectfifo.createObjectFifo}, \MLIRDIAL{objectfifo.link}, \MLIRDIAL{objectfifo.release}, \MLIRDIAL{objectfifo.subview.access}, \MLIRDIAL{tile}, \MLIRDIAL{use\_lock}, \MLIRDIAL{npu.dma\_memcpy\_nd}, \MLIRDIAL{runtime\_sequence}. This is not the full AIE dialect; specifically, operations related to low-level routing (e.g., \MLIRDIAL{packet\_flow}, \MLIRDIAL{switchbox}) are not currently implemented. 

Specifically, concurrent MLIR-AIE constructs such as \MLIRDIAL{aie.core} and \MLIRDIAL{aie.mem} are translated to parallel tasks in the \MLIRDIAL{async} dialect.  Lock operations, such as \MLIRDIAL{lock} and \MLIRDIAL{use\_lock}, represent counting semaphores and are lowered into the \MLIRDIAL{verif} dialect.  More complex communication constructs, such as \MLIRDIAL{aie.objectFifo}, are lowered into a combination of buffer allocations, data copy operations, and semaphores.  The main complexity here is representing complex data layout transformations that can be implemented as part of \MLIRDIAL{aie.objectFifo} accesses.  Most other constructs lower into standard MLIR dialects representing sequential behavior, although some concurrent subtleties exist as discussed later in Sec.~\ref{sec:aie-expe}.

%% file: experiments.tex
We now present experimental results on correctness verification of a variety of MLIR-based optimization flows. We present detailed case studies for AIR in Sec.~\ref{sec:air-expe} and AIE in Sec.~\ref{sec:aie-expe}, demonstrating the ability of \PEQC to verify optimized, high-performance implementations for the AI Engine, and to uncover subtle potential concurrency bugs. We then present extensive evaluation of \texttt{mlir-opt} on the PolyBench/C benchmarking suite \cite{polybench}, demonstrating \PEQC's ability to quickly find numerous known limitations but also unknown bugs in the tested toolchain in Sec.~\ref{sec:polybenchexpe}.

We have implemented all our tools as open-source, BSD-licensed software \cite{peqc-mlir-github}. \PEQC was evaluated using a single CPU core from an Intel Xeon E3-1240v6 server running at 3.70GHz, with 32GB of memory.

\subsection{Verification of MLIR-AIR}
\label{sec:air-expe}
\label{subsec:air-expe}

To demonstrate the ability of \PEQC to verify pairs of programs using possibly different MLIR dialects prior to conversion, Table.~\ref{tab:airexpe1} displays statistics for the evaluated program variants. All variants are produced as intermediate steps in the MLIR-AIR compilation flow and eventually are lowered to MLIR-AIE (further steps are evaluated in Sec.~\ref{subsec:aie-expe}). Fig.~\ref{fig:flowdiagramexpeairaie} displays the flow of this correctness verification approach.
\begin{figure}[h!tb]
\vspace{-.3cm}
\includegraphics[width=11cm]{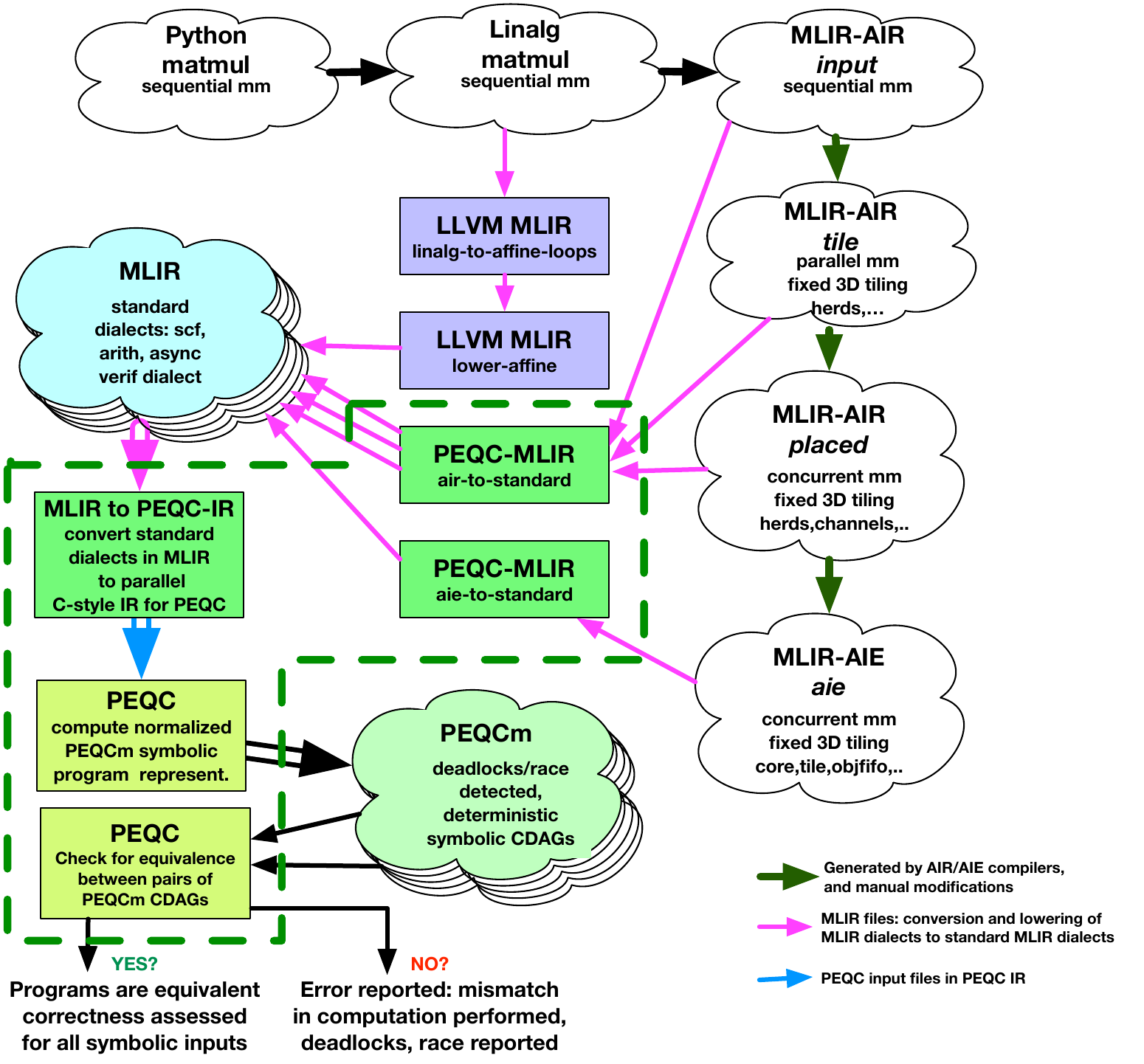}
\caption{\label{fig:flowdiagramexpeairaie}Flow of AIR/AIE correctness checking with \PEQC.}
\vspace{-.3cm}
\end{figure}

The \MLIRDIAL{linalg} program is built from a Python specification, and represents a sequential dense matrix-multiplication kernel, for a matrix (block) of size $32x32$ to be run on the AI Engine.
\textsf{air.input}, the reference input program to be compiled by the MLIR-AIR toochain into a valid MLIR-AIE implementation on the AI Engine, was built by applying lowering of \MLIRDIAL{linalg} to \MLIRDIAL{affine} then to \MLIRDIAL{scf}.
\textsf{air.tiled} is a hierarchically tiled transformation of \textsf{air.input}, specified with a user-defined transformation script using the MLIR Transform dialect. It introduces AIR operations, such as \MLIRDIAL{segments}, DMA copies, etc.
\textsf{air.placed} is optimized and lowered from \textsf{air.tiled}, and introduces \textsf{air.channels} for communications.

These programs are verified pairwise as illustrated in Fig.~\ref{fig:flowdiagramexpeairaie}. First, they are converted to equivalent standard MLIR dialects (using existing LLVM MLIR lowering passes, or \PEQC lowering passes we have developed), then translated to \PEQC's input Parallel Intermediate Representation (PIR) as per Sec.~\ref{sec:mlirsupport}. These PIR files are then checked pairwise for equivalence, following Sec.~\ref{sec:formal}. 

\begin{table}[h!tb]
\vspace{-.3cm}
\caption{\label{tab:airexpe1}Statistics on intermediate files produced during AIR compilation flow.}
\vspace{-.4cm}
{\small
\begin{tabular}{l|r|r|r|r|r|r|r|r|r}
bench & S-LoC & C-LoC & I-LoC & nb-s & nb-conc & nb-sync & hb-rat & $t_{int}$ & $t_{eq}$ \\ \hline

\textsf{linalg} & 8 & 17 & 56 & 201870 & 0 & 0 & 0 & 1.61s & N/A\\
\textsf{air.input} & 17 & 51 & 56 & 541420 & 0 & 0 & 0 & 3.39s& 0.04s\\
\textsf{air.tiled} & 102 & 133& 160 & 743594 & 25 & 25 &0.0003 & 5.56s & 0.04s\\
\textsf{air.placed} & 190 & 902 & 1030 & 1320799 & 440 & 77815 & 0.39 & 9.01s & 0.04s\\ \hline
\end{tabular}
}
\vspace{-0.2cm}
\end{table}
We report in Table.~\ref{tab:airexpe1} metrics about the program source (S-LoC: lines of code of original MLIR file; C-LoC: for the converted program using only standard dialects; I-LoC: for the final PIR file checked for correctness), and evaluation metrics (nb-s: total number of statements interpreted; nb-conc: total number of tasks created; nb-sync: total number of synchronizations). hb-rat is the ratio between the number of nodes in our representation of the graph $G$ divided by nb-s. 
Finally, $t_{int}$ is the total time to interpret, that is to produce the final memory state for the program, and $t_{eq}$ is the total time to check equivalence between two memory states, with the one of the previous file. In this set of experiments, all files were proved equivalent, leading to the maximum equivalence time (the process ran to completion).

Interpretation time follows roughly the number of statements interpreted, while equivalence time remains constant since the memory states for non-local variables are identical in these experiments (the programs are equivalent, for symbolic input matrices but a fixed problem size).

\subsection{Verification of AIE}
\label{sec:aie-expe}
\label{subsec:aie-expe}

\PEQC supports a rich subset of AIE operations, as described in Sec.~\ref{subsec:mlir-aie-convert}.
We now illustrate an important case study, demonstrating the merits of \PEQC in catching subtle concurrency bugs, as well as the ability to verify equivalent a low-level ``hardware-friendly'' optimized MLIR program with the original, sequential \MLIRDIAL{linalg} program. In this flow, the final AIR program with channels and placement is lowered and optimized to an AIE representation with \texttt{air-opt} and \texttt{aie-opt}.

The final AIE file is 753 lines of code, leading to about 2.2k LoCs for the final PIR file obtained after conversion and translation. In this case, \PEQC reported a possible race. Fig.~\ref{fig:bugar1} is an excerpt from the matrix-multiply AIE program, but zooming in on a chain of acquire-release for a task to execute on a PE (a memory tile). \texttt{aie.dma\_start(S2MM, 0, \^{}bb1, \^{}bb5)} starts a DMA communication, in the S2MM direction, and both basic blocks \texttt{\^{}bb1} and \texttt{\^{}bb5} are started in parallel.

\begin{figure}[h!tb]
\begin{lstlisting}[basicstyle=\footnotesize,language=c++,escapeinside={$}{$}]
%memtile_dma_2_1 = aie.memtile_dma(%mem_tile_2_1) {
      %0 = aie.dma_start(S2MM, 0, ^bb1, ^bb5)
    ^bb1:  // 2 preds: ^bb0, ^bb1
      aie.use_lock(%lock_2_1, AcquireGreaterEqual, 2)
      aie.dma_bd(%buf20 : memref<64x16xi32, 1 : i32>, 0, 1024)
      {task_id = 0 : i32}
      aie.use_lock(%lock_2_1_0, Release, 2)
      aie.next_bd ^bb1
    ^bb2:  // pred: ^bb3
      aie.end
    ^bb3:  // pred: ^bb5
      %1 = aie.dma_start(MM2S, 0, ^bb4, ^bb2)
    ^bb4:  // 2 preds: ^bb3, ^bb4
      aie.use_lock(%lock_2_1_0, AcquireGreaterEqual, 1)
      aie.dma_bd(%buf20 : memref<64x16xi32, 1 : i32>, 0, 512, 
      [<size = 64, stride = 16>, <size = 8, stride = 1>]) 
      {task_id = 0 : i32}
      aie.use_lock(%lock_2_1, Release, 1)
      aie.next_bd ^bb4
    ^bb5:  // pred: ^bb0
      %2 = aie.dma_start(MM2S, 1, ^bb6, ^bb3)
    ^bb6:  // 2 preds: ^bb5, ^bb6
      aie.use_lock(%lock_2_1_0, AcquireGreaterEqual, 1)
      aie.dma_bd(%buf20 : memref<64x16xi32, 1 : i32>, 8, 512, 
      [<size = 64, stride = 16>, <size = 8, stride = 1>])
      {task_id = 0 : i32}
      aie.use_lock(%lock_2_1, Release, 1)
      aie.next_bd ^bb6
}
\end{lstlisting}
\caption{\label{fig:bugar1}Excerpt of \texttt{acquire}/\texttt{release} nondeterminism.}
\end{figure}

This code is nondeterministic: in the typical case, both \texttt{bb4} and \texttt{bb6} acquire \texttt{\%lock\_2\_1\_0} once and execute exactly once.
In the case of significant stalling on the hardware implementing \texttt{bb6}, however, \texttt{bb4} may acquire the lock and execute twice before \texttt{bb6} is able to acquire the lock once.
Note that the \texttt{aie.dma\_bd} operations in \texttt{bb4} and \texttt{bb6} do not implement the same movement due to different arguments. \PEQC reports this nondeterminism, analyzing the file in less than 5s (early exit on first error, otherwise 31s for a corrected, equivalent version). This is a case where testing alone did not discover the concurrency error, which is unlikely to be triggered except in unusual scenarios involving significant stalling.  The implemented fix \cite{mlir-aie-github} uses additional locks to ensure no two basic blocks can access the same buffer.

\subsection{Verification of \texttt{mlir-opt} and \texttt{polymer-opt}}
\label{sec:polybenchexpe}

\input{expe-polybench.tex}

%% file: expe-polybench.tex
We conducted an extensive evaluation of several tools: \texttt{mlir-opt} from LLVM 19.0.0git; \texttt{cgeist}, and \texttt{polygeist-opt} and \texttt{polymer-opt} from LLVM 18.0.0git, the latest versions available from \cite{cgeist-web}. 

\texttt{mlir-opt} is the base MLIR compiler provided with LLVM, and offers numerous conversion and optimization passes, as illustrated below. \texttt{cgeist} takes a C program as input, and raises it to the \MLIRDIAL{affine} dialect when possible. \texttt{polymer-opt} is a companion to \texttt{cgeist}, taking \MLIRDIAL{affine} MLIR programs as input and calling the Pluto polyhedral optimizer \cite{uday08pldi} on them, to implement automatic loop transformations for tiling, fusion, and parallelization. We chose these three tools as they exhibit different levels of maturity: \texttt{mlir-opt} is a community-maintained LLVM package, \texttt{cgeist} is a separate project \cite{cgeist-web} developed by MLIR authors, and Pluto is a research compiler. 

For benchmarks, as we can exercise \texttt{cgeist} and Pluto with only affine programs, we chose PolyBench/C version 4.2.1 \cite{polybench-web}. Detailed statistics for these benchmarks are available in the Appendix. PolyBench/C covers key computation patterns in linear algebra, image processing, and machine learning.

We systematically raise the PolyBench C benchmark kernel function to MLIR with \texttt{cgeist} and then apply \texttt{polymer-opt} and/or \texttt{mlir-opt} optimizations. Every resulting MLIR file is converted and translated to a \PEQC-ready input, and verified for equivalence. 

In total, we consider 22 different optimization paths for 30 benchmarks, leading to 580 variants actually generated. For some cases, the toolchain (\texttt{mlir-opt} and/or \texttt{polymer-opt}) fails to produce a program. Running the entire set on the PolyBench mini dataset takes less than 30 minutes total (about 15 min. to generate transformed files with MLIR, and about 10 minutes to convert and check them for equivalence). As shown below, this still allows \PEQC to uncover a variety of bugs and possible misuse of some tools (which are not documented), further justifying the inclusion of such a fast verification scheme in compiler testing procedures. We also evaluated with the medium dataset size, obtaining identical bugs/correctness results.

\def \mlirfuse {FS}
\def \mlirlicm {LICM}
\def \mlirnorm {NM}
\def \mlircanon {CAN}
\def \mlirtile {TL}
\def \mlirunroll {UR}
\def \mliruaj {UAJ}
\def \mlirpar {PAR}
\def \mlirpluto {Pluto}
\def \FuseAndTile {FuseTile}
\def \mlirAll {All}

\paragraph*{List of optimizations considered} Table~\ref{fig:optimtable1} displays the various optimization paths we evaluate. 
They are not meant to capture all possible optimizations, but rather classical ones that some may wish to apply especially to dense linear algebra programs: tiling, fusion, unroll-and-jam, parallelization; and, of course, Pluto-based polyhedral tiling+fusion, as a first transformation before these \texttt{mlir-opt} optimizations are invoked. 

\begin{table}[h!tb]
\vspace{-.0cm}
\caption{\label{fig:optimtable1}List of individual optimizations considered.}
\begin{minipage}{0.99\textwidth}
\centering
    \small
    \begin{tabular}{l||l}
    Acronym & command \\ \hline
        \mlirfuse & \texttt{mlir-opt --affine-loop-fusion} \\ 
        \mlirlicm & \texttt{mlir-opt --affine-loop-invariant-code-motion} \\ 
        \mlirnorm & \texttt{mlir-opt --affine-loop-normalize} \\ 
        \mlircanon & \texttt{mlir-opt --canonicalize} \\ 
        \mlirtile & \texttt{mlir-opt --affine-loop-tile="tile-size=32"} \\
        \mlirunroll & \texttt{mlir-opt --affine-loop-unroll} \\ 
        \mliruaj & \texttt{mlir-opt --affine-loop-unroll-jam} \\ 
        \mlirpar & \texttt{mlir-opt --affine-parallelize="max-nested=1"} \\ 
        \mlirpluto & \texttt{polymer-opt -reg2mem -extract-scop-stmt -pluto-opt} \\ 
    \end{tabular}
\end{minipage}
\end{table}

\paragraph*{Verification Results for \texttt{mlir-opt}} We now present our results, first on verifying only \texttt{mlir-opt}, 
without \texttt{polymer-opt} as a first transformation, in Table~\ref{table:verifnopluto}. That is, \texttt{mlir-opt} is applied on the original program as extracted by \texttt{cgeist}. This table displays the benchmarks \emph{that fail to be proved equivalent}; if no benchmark fails, N/A is displayed. 

\begin{table}[h!tb]
\vspace{-0.0cm}
\begin{minipage}{0.99\textwidth}
    \centering
    \caption{\label{table:verifnopluto}List of failed benchmarks for \texttt{mlir-opt} without Pluto/\texttt{polymer-opt} optimizations. Note ADI and durbin are not displayed: the code produced by Cgeist fails to be proved equivalent to the original program, meaning they fail in all cases.}
    \vspace{-.25cm}
{   \centering \small
    \begin{tabular}{l|l}
    Optimization & Failing benchmarks \\ \hline
        \mlirfuse & correlation, covariance, deriche,  \\ 
        & doitgen \\
        \mlirlicm & N/A \\ 
        \mlirnorm & N/A \\ 
        \mlirtile & floyd-warshall, seidel-2d \\ 
        \mlirunroll & N/A \\ 
        \mliruaj & floyd-warshall \\ 
        \mlirpar & N/A \\

        \mlirnorm, \mlircanon, \mlirtile, \mliruaj, \mlirtile & floyd-warshall, seidel-2d \\ 
        \mlirnorm, \mlircanon, \mlirfuse, \mlirtile & correlation, covariance, deriche, \\
        & doitgen, floyd-warshall, seidel-2d \\
        \mlirunroll, \mliruaj, \mlirtile, \mlirlicm, \mlirfuse, \mlirnorm & floyd-warshall, seidel-2d \\ 
        \mlirunroll, \mliruaj, \mlirtile, \mlirpar, \mlirlicm, \mlirfuse, \mlirnorm & floyd-warshall, seidel-2d        \\
    \end{tabular}
    }
\end{minipage}
\end{table}

We observe numerous errors, which we investigated and confirmed manually. Several of these errors have in fact also been recently reported with HEC \cite{yin25}.
For example, fusion on imperfectly nested loops such  as correlation, doitgen, etc. is incorrectly implemented: inner/no-outer loops with non-forward dependences have been fused for these benchmarks. More interestingly, for Floyd-Warshall any tiling or unroll-and-jam is incorrect: \texttt{mlir-opt} tiled all 3 loops, while the $k$ loop is not tilable as-is on FW. On the other hand, the tiling produced is valid for all other benchmarks except Seidel-2d (which suffers a similar issue). We particularly note that tiling is \emph{correct} for other stencils: for the Jacobi stencils for example only the time loop is strip-mined. But for Seidel-2d, all 3 loops are tiled, without skewing, which is incorrect. 

In practice, a question remains open: are these optimization flags in \texttt{mlir-opt} supposed to detect correctly whether they can be applied on the input MLIR affine program? Their implementation suggests such analysis is not implemented, yet there are numerous cases where applying these optimizations lead to a correctly transformed MLIR program. \PEQC can help systematically uncover such potential misuse in a mechanical fashion, by testing systematically e.g. on PolyBench as we conducted above. However, as shown below, it can also uncover critical bugs in tools that perform such analysis and transformation selection automatically, like Pluto/Polymer.

\paragraph*{Verification results for Pluto and \texttt{mlir-opt}}
We now report the same evaluation, but the input MLIR programs have first been optimized by \texttt{polymer-opt}, invoking the polyhedral Pluto optimizer with default tile size of 32. Pluto raises the program to polyhedral representation, transforms it, and regenerates an AST for the fused/tiled/skewed program, itself converted back to MLIR \texttt{affine} by \texttt{polymer-opt}. Note the \texttt{mlir-opt}
 optimizations considered operate best on an affine MLIR program without function calls, so after Pluto but prior to \texttt{mlir-opt} optimizations we systematically invoke \texttt{mlir-opt} \texttt{--inline}, which can itself sometimes fail. In particular, ADI, ludcmp, nussinov and symm fail at inlining and are not displayed below.

Table~\ref{table:plutoLA} displays the list of failed benchmarks that involve linear-algebra-style computations. We observe systematic errors on covariance, lu and gramschmidt in particular: the produced code performs out-of-bound array accesses. We confirmed the bug by simply displaying the array locations accessed (a printf) in the C code printed by Pluto during debugging: it does produce negative indices. We remain unsure of the exact reason of these bugs: the program produced by \texttt{cgeist} is equivalent to the original program, but Pluto-PolyLib itself seems to generate incorrect code. This could be due to an incorrect polyhedral representation extracted, or a problem in \texttt{polymer-opt}.
\vspace{-.2cm}
\begin{table}[h!tb]
    \centering
    \caption{\label{table:plutoLA}List of failed benchmarks for Pluto/\texttt{polymer-opt} followed by \texttt{mlir-opt} optimizations, for the linear algebra benchmarks (polybench/linear-algebra and polybench/datamining benchmarks). NG indicates the file was Not Generated by \texttt{mlir-opt}.}
    \vspace{-.25cm}
{    \small
    \begin{tabular}{ m{16em} | m{24em} }
    Optimizations & List of failed benchmarks \\ \hline
        \mlirpluto, \mlirfuse & correlation, covariance, gramschmidt, lu, \\
        \mlirpluto, \mlirlicm & covariance, gramschmidt, lu \\ 
        \mlirpluto, \mlirnorm & covariance, gramschmidt, lu  \\ 
        \mlirpluto, \mlirtile & covariance, gramschmidt, lu(NG) \\
        \mlirpluto, \mlirunroll & covariance, gramschmidt, lu \\
        \mlirpluto, \mliruaj & covariance, gramschmidt, lu, \\ 
        \mlirpluto, \mlirpar & covariance, doitgen, gramschmidt, trmm \\ 
        \mlirpluto, \mlirnorm, \mlircanon, \mlirtile, \mliruaj, \mlirtile & correlation, covariance, gramschmidt, lu \\ 
        \mlirpluto, \mlirnorm, \mlircanon, \mlirfuse, \mlirtile & correlation, covariance, gramschmidt, lu \\ 
        \mlirpluto, \mlirunroll, \mliruaj, \mlirtile, \mlirlicm, \mlirfuse, \mlirnorm & covariance, gramschmidt, lu(NG) \\ 
        \mlirpluto, \mlirunroll, \mliruaj, \mlirtile, \mlirpar, \mlirlicm, \mlirfuse, \mlirnorm & covariance, doitgen, gramschmidt, lu(NG) \\    \end{tabular}
    }
\label{tab:fail-pluto-la}
\vspace{-.25cm}
\end{table}
For stencil computations (jacobi-1d, jacobi-2d, heat-3d, fdtd-2d, seidel-2d), we omit the table which is summarized instead as observing \emph{systematic} failures for fdtd-2d, jacobi-2d and seidel-2d. All other stencil benchmarks pass. For these 3 failing stencils, they suffer a similar issue as above: the code produced after Pluto's skewing is incorrect, and performs an out-of-bound (-1) array access along the j dimension, which we confirmed by printing the actual value of the accessed locations in the program printed by Pluto. We are unsure of the location of this problem, but remark that we exercise a particular case with the MINI dataset: often the problem size is smaller than the tile size, and we inline loop bounds while Pluto is designed for parametric loop bounds. We suspect a possible issue with the \texttt{ceild}/\texttt{floord} implementation in this case, but further investigation is needed. Note that these benchmarks also fail similarly with larger problem sizes (MEDIUM) which far exceed the tile sizes.

We also remark that, of course, not all optimizations led to transforming all programs: \texttt{mlir-opt} does bypass applying these optimizations in numerous cases (supposedly where they are not applicable), but still applies them incorrectly in several cases, as reported above. 

\paragraph*{Verification throughput experiments} 
We conclude this analysis by displaying in Fig.~\ref{fig:throughputplot} the throughput, in number of statements interpreted per second, for all experiments conducted above, distinguishing MINI and MEDIUM dataset sizes. 
For the MINI dataset, times are so low that the time to analyze the input program (compute symbol tables, normalize) has a measurable effect, but is marginalized when time is dominated by actual interpretation as in the medium dataset.  For \texttt{mlir-opt}, statements are typically 3-address and therefore faster to interpret, but there are often 10x more statements total to interpret.

\begin{figure}[h!tb]
\vspace{-.0cm}
  \begin{center}
    \includegraphics[width=0.69\textwidth]{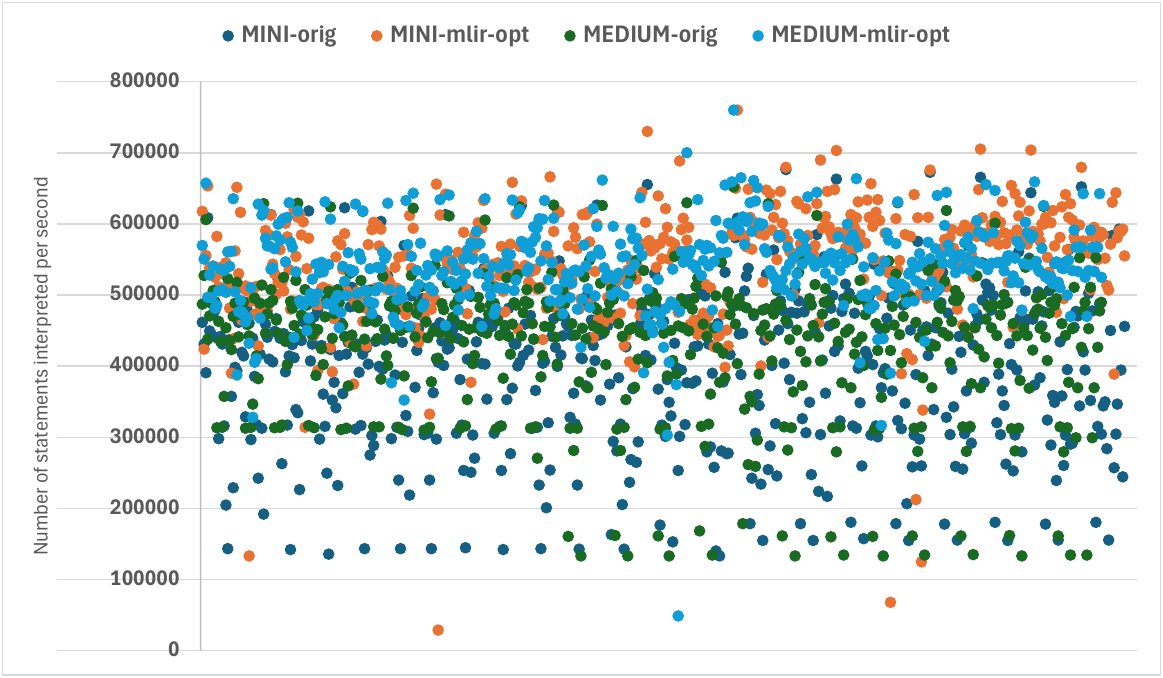}
  \end{center}
  \caption{\label{fig:throughputplot}Throughput of the verifier: interpretation time only, over 2160 program variants built with \texttt{mlir-opt}. Equivalence time is negligible vs. interpretation time.}
  \vspace{-.3cm}
\end{figure}

\subsection{Scaling to Larger Problem Sizes}

We conclude our experimental evaluation by briefly outlining ongoing work we are conducting to improve further scaling to larger problem sizes. 
Scaling to larger problem sizes  is mostly limited by memory consumption, reaching potentially tens of GB for e.g. a 256x256 triple matrix product \cite{fpga24}. In practice, the CDAGs are stored explicitly, with one node per operation executed. We are currently developing techniques to automatically synthesize \emph{code} which when executed implements exactly the CDAG: this is a form of (affine) folding from the trace \cite{rodriguez2016trace,pldi2019augustine}. Although we have only preliminary results for only sequential program verification, our prototype already demonstrates very strong potential to increase scaling to larger problem sizes.

For example, for matrix-products, we remove this memory limitation by automatically folding CDAGs during their construction into a sort of "chain of recurrence" representation, limiting memory usage to spatial dimensions (that is typically the size of the input data) but becoming independent of the reduction size, providing an $O(N)$ improvement in space complexity for GEMM-style kernels for example. 

In particular, our preliminary CDAG compaction approach excels for programs with \emph{affine evolution} of the computations formed to produce the output values (this even if the actual schedule of operations implemented is itself non-affine), as is typical in e.g. dense linear algebra and inference of deep learning models, where sequences of matrix/tensor manipulations and multiplications are dominant.

\begin{figure}[h!tb]
  \begin{center}
    \includegraphics[width=0.77\textwidth]{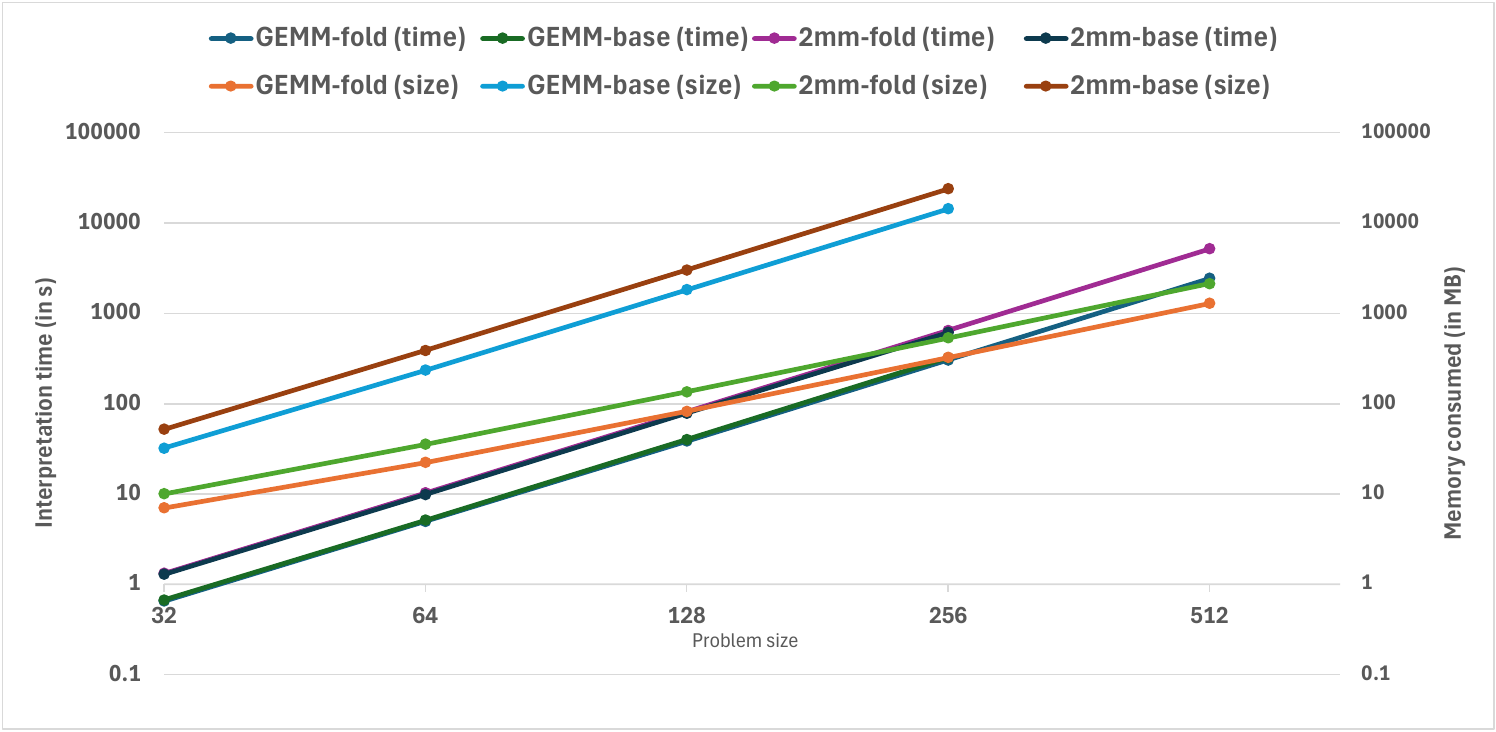}
  \end{center}
  \caption{\label{fig:scaling}Time and maximal memory usage for gemm and 2mm, end-to-end verification. Fold uses affine folding, base uses normal CDAGs}
  \vspace{-.3cm}
\end{figure}  

We display in Fig.~\ref{fig:scaling} the scaling achieved to verify two versions (original and optimized variants, not buggy) of two benchmarks: gemm and 2mm from PolyBench, as they particularly exercise the need to compress reductions. Visibly, the memory consumption does not grow linearly with instructions executed ($O(N^3)$) anymore when folding is activated. It leaves time to verification as the only "limit", and can achieve orders of magnitude reduction in memory usage, as seen on the green and orange lines, which increase with lower slopes than the base version ($O(N^2)$) using folding.

%% file: related.tex
While we are unaware of any other approach with the same coverage as experimented in Sec.~\ref{sec:experiments}, especially with the range of transformations and concurrency/parallelism we support, numerous prior works target related problems (and may handle some of the simpler benchmarks in our 
\texttt{mlir-opt} evaluation), and their coverage may intersect with ours.

UC-KLEE \cite{ramos2015under} targets equivalence of programs with complex control-flow, via symbolic execution. However, it often fails to scale even for somewhat simple cases we support. KLEE \cite{klee-web} was extended to support floating-point \cite{liew2017floating}, but does not scale to the problem sizes we target, by orders of magnitude.

Another fundamental direction to prove correctness of transformations is translation validation (TV) \cite{necula2000translation}. Alive2 provides bounded TV for LLVM \cite{lopes2021alive2}, and it excels at catching a wide class of intricate bugs related e.g., to incorrect instructions emitted. However, it is not able to prove equivalence for the scope of programs we target. Other approaches for TV tend to restrict the scope of programs and optimizations supported, yet deliver significant bug detection capability, including for MLIR e.g., \cite{bang2022smt,wang2024systematic}. TV has also been deployed for specialized languages like Halide \cite{clement2022end}, and for HLS \cite{chouksey2020hls,karfa2006hls}. 
Recent work presented frameworks to develop verifiable compilers, e.g., \cite{rosu10,eldridge21} complementing existing certified compilers, e.g. \cite{leroy2016compcert,vsevvcik2013compcerttso,herklotz2021formal}.

Numerous works target problems related to the correctness of MLIR programs, typically by defining formally the behavior of transformations or at least equivalence between them, e.g., \cite{bhat24, fehr25, yin25, wang24a} which all target the verification of MLIR components, albeit for a different scope than \PEQC. 
In contrast to these techniques, we typically trade off requiring additional information to limit the program scope (such as the problem size), but broaden significantly the set of program transformations handled therein. No prior work to our knowledge addresses the correctness verification of MLIR-AIR and MLIR-AIE toolchains.

More recently, equality saturation has reached a level of performance allowing impactful transformations to be modeled \cite{tate2009equality,wang2020spores,willsey2021egg,zhang2023better,flatt, merckx25}; and is deployed for MLIR verification \cite{yin25}.  However, these techniques are far from covering the rewrites necessary to model the class of transformations we support, especially regarding concurrency, synchronizations, and parallelization. But approaches such as \texttt{egglog} may be used to normalize our CDAGs further in lieu of of our rewrite rule system~\cite{zhang23}.

Other approaches to checking or guaranteeing the correctness of (parallel) programs include model checking e.g.,  \cite{mercer2015model,vechev2010automatic,siegel15,jakobs21a,golovin25,kokologiannakis19}, as well as static and dynamic race detection approaches \cite{gu18, swain20, atzeni16} which often also build a representation of the happens-before relation to detect possible races \cite{gu18}. SMT-based techniques such as CIVL \cite{siegel15,wu2022verifying,kokologiannakis19} have the potential ability to reason on a wider class of programs
(e.g., not requiring fixed problem size), however care must be taken that in practice, most often strong assumptions are made on the range of some variables (such as fixing the problem size to one or a few values), limiting otherwise scaling/coverage for complex transformed programs such as those we support.

PolyCheck is a static+dynamic analysis approach to  prove equivalence between a sequential affine program and its (possibly non-affine) transformation \cite{polycheck}. 
It supports arbitrary iteration reordering transformations, and its space complexity is linear in the data size touched by the program, not the number of operations performed. 
In contrast, our framework does not require any of the programs to be affine, and proves equivalence for concurrent programs. 
ISA \cite{verdoolaege2012equivalence} targets the proof of equivalence between a pair of affine programs, including parametric loop bounds, under a limited yet powerful set of transformations. However, both remain highly limited overall in code transformation coverage and rely on polyhedral static analyses \cite{polycheck}. 
Numerous other approaches have been developed to prove equivalence e.g. of expression trees or limited forms of programs \cite{shashidhar2005verification,karfa2013verification} however their coverage falls short of supporting the complex programs we target. 

Prior works have reasoned formally about determinism for parallel languages, such as for X10~\cite{lee10}, Concurrent Collections \cite{budimlic10}, and Habanero-C with promises \cite{jin23}. Our formalism in Sec.~\ref{sec:formal} took initial inspiration from these works, but we profoundly broadened the scope of concurrency and synchronization styles (e.g., binary and counting semaphores) that are proved to be deterministic, and also enable reasoning on symbolic variables.

%% file: conclusion.tex
Our proposed approach to correctness verification of a class of MLIR programs was presented, formalized, and evaluated extensively. It provides formal guarantees of correctness for a highly useful class of programs.
The lowering passes implemented in MLIR and by users would also benefit from correctness verification. To address this issue and provide verified converters in the long term, we argue that designers should formalize the semantics of their dialect ideally in terms of operations from standard dialects, equipped with clear semantics for standard dialects (which itself still needs to be developed). We hope this paper encourages the community to work on formalizing the semantics of dialects and their lowering to standard dialects, and leverage \PEQC's formalization for verification we presented in this paper.

%% file: appendix.tex
\section{Appendix}

\subsection{Implementation Details: Debugging Features}

\input{userfeedback.tex}

\subsection{Per-benchmark Evaluation Metrics}

\paragraph*{Basic statistics on benchmarks}
We display details on the benchmarks evaluated.
Table~\ref{table:allpolybench} displays the list of all benchmarks considered, as well as some statistics on the number of statements interpreted, number of nodes in the CDAGs produced, memory use and time to interpret the original programs, for both the MINI and MEDIUM dataset sizes of PolyBench/C. Interpretation time of transformed variants follows similar trends.
However, often more statements are produced via transformations, so statement count and interpretation time increases accordingly, sometimes by up to 3x. 

\begin{table}[!ht]
    \centering
    \caption{\label{table:allpolybench}List of all 30 PolyBench/C benchmarks, and statistics on interpreting the original input benchmarks (number of statements interpreted, number of CDAG nodes eventually created, time and space to interpret) for both the MINI and MEDIUM datasets}
    \vspace{-.25cm}
{    \scriptsize    
    \begin{tabular}{l|l|l|l|l|l|l|l|l|l}
    benchmark & description & \multicolumn{2}{c|}{\#stmts} & \multicolumn{2}{c|}{\#cdag nodes} &
    \multicolumn{2}{c|}{Mem. used (B)} & \multicolumn{2}{c}{Time (s)} \\ 
     & & MINI & MED & MINI & MED &MINI & MED &MIN & MED \\ \hline   
        2mm & 2 Matrix Mult. $(\alpha A B C + \beta D)$ & 43.2k & 44.5M & 102.4k & 110.2M & 6.15M & 6.48G & 0.14 & 139 \\ \hline
        3mm & 3 Matrix Mult. $((A B)(C D))$ & 70.2k & 68.9M & 134.9k & 137M & 8.12M & 8.08G & 0.23 & 209 \\ \hline
        adi & Alternating Direction Implicit solver & 98.1k & 55.2M & 717.1k & 431M & 42.1M & 25.3G & 0.54 & 298 \\ \hline
        atax & Matrix transpose and vector mult. & 9993 & 963.4k & 22.5k & 2.24M & 1.45M & 143M & 0.03 & 2.81 \\ \hline
        bicg & BiCG sub-kernel of BiCGStab solver & 6736 & 642.8k & 22.5k & 2.24M & 1.45M & 143M & 0.02 & 2.30 \\ \hline
        cholesky & Cholesky decomposition & 36.2k & 32.4M & 69.6k & 64.6M & 4.21M & 3.80G & 0.12 & 107 \\ \hline
        correlation & Correlation computation & 48.2k & 23.2M & 101.2k & 46.7M & 6.07M & 2.75G & 0.16 & 70.1 \\ \hline
        covariance & Covariance computation & 47.6k & 23.1M & 90.7k & 46.0M & 5.45M & 2.71G & 0.14 & 70.3 \\ \hline
        deriche & Edge detection filter & 133.5k & 11.1M & 401.5k & 33.9M & 24.8M & 2.09G & 0.35 & 26.8 \\ \hline
        doitgen & Multi-res. analysis kernel (MADNESS) & 42.8k & 22.6M & 70.4k & 43.3M & 4.22M & 2.55G & 0.12 & 67.4 \\ \hline
        durbin & Toeplitz system solver & 7509 & 22.6M & 10.1k & 965.2k & 604.5k & 56.7M & 0.02 & 1.68 \\ \hline
        fdtd-2d & 2-D Dinite Diff. Time Domain kernel & 108.7k & 43.2M & 377.5k & 157M & 22.3M & 9.27G & 0.45 & 178 \\ \hline
        floyd-warshall & Graph shortest path length & 662.7k & 17.6M & 2.81M & 75.8M & 164M & 4.44G & 3.12 & 83.7 \\ \hline
        gemm & Matrix-multiply $(C = \alpha A B + \beta C)$ & 49.1k & 32.0M & 139.7k & 95.4M & 8.34M & 5.61G & 0.16 & 104 \\ \hline
        gemver & Vector mult. and matrix add. & 15.0k & 1.45M & 55.2k & 5.4M & 3.39M & 332M & 0.05 & 4.86 \\ \hline
        gesummv & Scalar, vector and matrix mult. & 3839 & 251.8k & 14.8k & 1.00M & 1.01M & 68.2M & 0.01 & 1.07 \\ \hline
        gramschmidt & Gram-Schmidt decomposition & 59.2k & 34.9M & 115.3k & 69.7M & 6.93M & 4.10G & 0.18 & 106 \\ \hline
        heat-3d & Heat equation over 3D data domain & 73.1k & 34.1M & 925.2k & 494M & 54.4M & 28.9G & 0.84 & 413 \\ \hline
        jacobi-1d & 1-D Jacobi stencil computation & 3497 & 239.4k & 10.2k & 718.0k & 604.4k & 42.2M & 0.01 & 0.76 \\ \hline
        jacobi-2d & 2-D Jacobi stencil computation & 98.7k & 37.1M & 473.9k & 184M & 27.9M & 10.8G & 0.53 & 202 \\ \hline
        lu & LU decomposition & 69.1k & 64.5M & 131.8k & 128M & 7.86M & 7.55G & 0.24 & 212 \\ \hline
        ludcmp & LU decomposition + Forward Subst. & 76.6k & 1.83M & 141.7k & 3.58M & 8.45M & 211M & 0.22 & 5.02 \\ \hline
        mvt & Matrix Vector product and Transpose & 9945 & 963.2k & 22.6k & 2.24M & 1.46M & 143M & 0.03 & 2.74 \\ \hline
        nussinov & Dyn. programming for seq. alignment & 122.4k & 3.05M & 522.0k & 1.00M & 30.8M & 1.17G & 0.66 & 35.8 \\ \hline
        seidel-2d & 2-D seidel stencil computation & 89.8k & 47.7M & 785.8k & 428M & 46.2M & 25.1G & 0.83 & 451 \\ \hline
        symm & Symmetric matrix-mult. & 26.5k & 19.4M & 98.5k & 72.7M & 5.90M & 4.27G & 0.10 & 73.9 \\ \hline
        syr2k & Symmetric rank-2k update & 31.9k & 17.6M & 171.7k & 104M & 10.2M & 6.14G & 0.16 & 88.2 \\ \hline
        syrk & Symmetric rank-k update & 31.9k & 17.6M & 86.8k & 52.3M & 5.21M & 3.07G & 0.11 & 59.9 \\ \hline
        trisolv & Triangular solver & 2597 & 241.8k & 6637 & 642.4k & 520.2k & 50.1M & 0.01 & 0.79 \\ \hline
        trmm & Triangular matrix-mult. & 20.2k & 14.6M & 36.9k & 28.9M & 2.25M & 1.70G & 0.07 & 44.3 \\ 
    \end{tabular}
}
\end{table}

%% file: userfeedback.tex
\label{subsec:user-feedback}
We briefly outline some debugging features we have implemented, to assist developers in understanding \PEQC's errors. When using the verifier to assist the developer during their (semi-)manual program code transformations, we recommend testing with small problem sizes, since the execution time of the interpreter is roughly 0.5M statements interpreted per second. Then, once the program is valid on smaller problem sizes, longer verification can be afforded, on the final problem size. 

\paragraph*{Visualizing CDAG differences} A typical error manifests as a mismatch between CDAGs produced: the computations differ. We implement a simple visualizer, based on dotty, for the CDAGs of the mismatched memory location. It embeds the line number in the source program that produced every operation, and color-codes the first mismatched node. We illustrate below with a small excerpt of two trees produced for a stencil benchmark, raised to MLIR. While equivalence succeeds when using A/C normalization, it fails without: on the right the operation is (floatconstant * expr), on the left (expr * floatconstant). 
Note the tree printed here has hundreds of nodes, we only display its root.
\begin{figure}[h!tb]
\vspace{-.5cm}
\includegraphics[width=0.35\textwidth]{tree1.pdf}
\hspace{1.5cm}\includegraphics[width=0.35\textwidth]{tree2.pdf}
\caption{Excerpt of CDAGs of two variants of the same 5-point stencil program, reordering the LHS and RHS of an operation. 
Line numbers help tracking the location of instructions which produced the operations, especially when temporary variables are used.}
\end{figure}

Optionally, we can also display the value of any variable marked by the user for debug at the time each operation (node in the cdag) was executed. We automatically compute the surrounding loop iterators and track them without user assistance, displaying the iteration at which an operation was executed to ease debugging.

\paragraph*{Reporting races and deadlocks}
Another major class of bug relates to incorrect parallelization and/or synchronization between programs. This is particularly critical when verifying optimizations for HLS that use blocking FIFOs and dataflow parallelization, such as the systolic array implementation of a matrix-multiply kernel \cite{fpga24}. 

In the case that the interpreter detects a race condition, the memory location (e.g. \texttt{temp[0][42]}) is displayed, as well as the line number and thread id of the instructions that accessed that location in a possible race. The source code of each thread involved in producing the race is then displayed. Optionally, the list of threads that executed and terminated before the race can be displayed.  

Our reporting for deadlocks, however, remains quite basic. We list all threads that are still active but cannot make further progress, along with their source code. We also report which semaphore/FIFO causes a deadlock (there may be more than one --- we only report the first one blocking further progress). Any violation of the rules for deterministic concurrency presented in Sec.~\ref{sec:formal} (e.g., multiple concurrent acquires) are reported also. Our debugging mode essentially displays the trace of all events involving semaphores (initialization, \texttt{wait}/\texttt{set}/\texttt{acquire}/\texttt{release}), whether they clear or are blocked, and whether a context-switch is needed. In our experience, this exhaustive trace enables the user to quickly locate culprits in case of synchronization bugs, but is best manipulated for small programs for easier debugging. Future work includes developing a recording of the synchronization operations executed by threads over the program lifetime, and displaying a forest of threads as nodes and their connections via semaphores/FIFOs instead of simply the trace.